\documentclass[
	fontsize=12pt,
	oneside,
	paper=a4,
	pagesize=auto,
	titlepage=yes,
	bibliography		= totocnumbered,
	listof					= totocnumbered,
	parskip					= half,
	abstract				= yes
	]{scrartcl}

\usepackage[utf8]{inputenc}
\usepackage[T1]{fontenc}
\usepackage[english]{babel}

\addtokomafont{descriptionlabel}{\rmfamily}
\addtokomafont{pagehead}{\scshape}

\usepackage{enumerate}
            
\usepackage{csquotes}
\usepackage[backend=biber,style=numeric,autolang=other]{biblatex}
\usepackage[headsepline]{scrlayer-scrpage}
\ohead{Fries, C.}
\ihead{Replication-Consistent Liquidity Forecasting}
\usepackage[intlimits]{amsmath}
\usepackage{amssymb,amsthm,aliascnt}	

\usepackage{graphicx,xcolor}

\definecolor{DarkBlue}{rgb}{0.0, 0.0, 0.5}
\definecolor{DarkRed}{rgb}{0.5, 0.0, 0.0}
\definecolor{DarkGreen}{rgb}{0.0, 0.5, 0.0}
\definecolor{DarkYellow}{rgb}{0.5, 0.5, 0.0}
\definecolor{Brown}{cmyk}{0.00,0.80,1.00,0.60}

\usepackage{ifpdf}	
\ifpdf
    \usepackage{hyperref}
    \hypersetup{
        pdfpagelayout=TwoPageRight,
        pdfborderstyle={/S/U/W 1}, 
        colorlinks = true,
        linktocpage = true,
        linkcolor=blue,
        citecolor=blue,
        urlcolor=blue,
        pdfauthor={Christian P. Fries},
        pdfsubject={Financial Derivatives Valuation and Risk Management},
        pdftitle={Replication-Consistent Liquidity Forecasting for Derivatives},
        pdfstartview=FitH,
        pdfkeywords={Derivative Valuation; Funding Curve; Discounting Sensitivities; Expected Cash-Flows; Liquidity Management}
    }
\else
    \usepackage{url}
\fi	
                
\usepackage{orcidlink}

\usepackage{titling}
\thanksmarkseries{arabic}

\usepackage{ifthen}
\newif\ifappendix
\appendixfalse

\newcounter{cpfNumberOfFigures} 
\newcounter{cpfNumberOfTables} 

\usepackage{listings}
\usepackage{comment}

\newcommand{\indicatorfcn}{\mathrm{\mathbf{1}}}

\newtheorem{proposition}{Proposition}

\newcommand{\articletitle}{Replication-Consistent Liquidity Forecasting for Derivatives}
\newcommand{\articlesubtitle}{Forward Funding Sensitivities and a Liquidity Valuation Adjustment for Settlement Lags}

\begin{document}

\subject{}
\author{Christian P.\ Fries\textsuperscript{\orcidlink{0000-0003-4767-2034}}
\thanks{Department of Mathematics, University of Munich, Munich, Germany}
\textsuperscript{,}%
\thanks{DZ Bank AG Deutsche Zentral-Genossenschaftsbank, Frankfurt am Main, Germany}
\textsuperscript{,}%
\thanks{Correspondence: \href{mailto:email@christian-fries.de}{email@christian-fries.de}}
}

\title{
    \articletitle \\[2ex] \large \articlesubtitle \\[2ex]
    \centerline{\small Version 0.9.7}
}
\date{September 15, 2025
    \\ {\large (revised: April 9, 2026)}
}

\maketitle

\begin{abstract}
We study cash-flow forecasting for derivatives used in liquidity management and clarify its relation to risk-neutral valuation and replication.

While it is well known that expectations under different measures (e.g., $\mathbb{P}$ vs.\ $\mathbb{Q}$) can yield different undiscounted cash-flows, further inconsistencies arise when payment times are stochastic. We show that using discounting sensitivities (funding-curve hedge ratios) instead of “expected cash-flows” aligns forecasting with the self-financing replication strategy and avoids measure-mixing/aggregation issues.

We then illustrate how a standard valuation model delivers pathwise funding requirements and propose a simple liquidity valuation adjustment to capture settlement lags and related timing frictions.

The note provides implementation hints (American Monte Carlo with adjoint differentiation) and clarifies when “expected cash-flows” are informative and when sensitivities should be used instead.
\end{abstract}

\tableofcontents

\clearpage
\section{Introduction}
\label{sec:liqui:introduction}

	Cash-flow forecasting is a central element of financial risk management in insurance and banking. For example, in the quantification of the one-year market-consistent embedded value (MCEV) distribution under Solvency~II for life insurance companies, one possible approach is the construction of static replicating portfolios that reproduce cash-flows or discounted terminal values of liabilities \cite{NatolskiWerner2014ReplicatingPortfolios}. More generally, cash-flow forecasting for non-deterministic liabilities or derivative positions may require both a model of the market factors driving contractual payments and a model of the trading activity that determines the portfolio and its funding needs over time.

	For stochastic cash-flows, expectations are a natural descriptive summary. Although expectations under the objective measure $\mathbb{P}$ may appear to be the most direct notion, they are usually difficult to estimate and are not the quantities delivered by standard pricing models. A common practical alternative is therefore to use risk-neutral valuation models to project \emph{undiscounted} expected cash-flows (or related quantities such as quantiles or exposures). For liquidity forecasting, however, this can be misleading. Undiscounted expected cash-flows depend on the chosen numeraire $N$ of the equivalent martingale measure $\mathbb{Q}^{N}$, and aggregating such expectations across products or legs valued under different numeraires may generate artificial residual terms. This measure dependence is a direct consequence of the change-of-numeraire framework of \cite{GemanElKarouiRochet1995}. Even when discounted present values are numeraire invariant, undiscounted timing profiles need not satisfy intuitive additivity properties. For products with stochastic payment times, such as Bermudan-style contracts, the sum of projected notional payments across time buckets need not agree with the total notional.

	These questions arise against the broader literature on collateral, funding, and replication in derivative valuation. The effects of collateralization and marking-to-market on intermediate cash-flows and discounting were discussed already by \cite{JohannesSundaresan2007}. Multiple-curve and collateral-aware valuation frameworks were developed, among others, in \cite{Piterbarg2010,FujiiShimadaTakahashi2009,FujiiTakahashi2013}. Replication-based treatments of funding and counterparty adjustments include \cite{BurgardKjaer2011,PallaviciniPeriniBrigo2012,Crepey2015,BieleckiRutkowski2015}; see also the discussion of fair value versus dealer-specific funding charges in \cite{HullWhite2014,AndersenDuffieSong2019}. Relatedly, replicating-portfolio methods have been used to approximate liability values and cash-flows, especially in insurance applications \cite{NatolskiWerner2014ReplicatingPortfolios,NatolskiWerner2017ReplicationByCashFlowMatching,NatolskiWerner2018Foundation}. 
	Our focus is different: we consider a derivative position that is already valued in a replication-based model and ask which model outputs are informative for liquidity management.
	In addition, delivery-versus-payment mitigates principal risk but does not eliminate liquidity risk, and recent supervisory and policy work emphasizes timing mismatches, collateral flows, and margin-related cash movements as important sources of intraday liquidity risk \cite{BIS1992DvP,ECB2024Intraday,JukonisLetiziaRousova2022}. This motivates the operational perspective adopted here: liquidity forecasting concerns the timely performance of payment obligations, while standard funding- and collateral-aware valuation frameworks assess the cost implications of different liquidity sources.

    In the nonlinear XVA/TVA literature, the term \emph{liquidity valuation adjustment} (LVA) is used in a broader pricing sense than in the present note. In~\cite{BrigoLiuPallaviciniSloth2016}, the deal value is decomposed into a clean price plus CVA, DVA, LVA and FVA, where LVA captures the costs/benefits of collateral margining and FVA the funding costs/benefits of the hedging strategy. In the TVA framework~\cite{Crepey2015}, LVA is the liquidity/funding component of the total valuation adjustment alongside CVA, DVA and replacement cost; the terminology is not fully uniform, and in the book treatment it is essentially the funding valuation adjustment net of the credit spread. By contrast, the present note uses LVA in a narrower operational sense: it measures the valuation effect of settlement lags or other timing frictions applied to replication-implied net funding cash increments, rather than a full bilateral pricing decomposition with counterparty risk, collateral and funding constraints.

	The guiding principle of the present note is therefore simple: if a product is valued by a self-financing replication strategy, liquidity-relevant cash-flow forecasts should be derived from the funding instruments' hedge ratios and their rebalancing, rather than from static undiscounted expectations of contractual legs.

	\medskip

	Against this background, the contribution of the present note is twofold. First, we show that for derivative products with stochastic payment amounts and/or times the intuitive notion of expected cash-flows can be inconsistent with replication-based liquidity management. Second, we propose a replication-consistent alternative based on funding sensitivities and their pathwise rebalancing, together with a simple settlement-lag liquidity valuation adjustment for timing frictions.

    \paragraph{Scope of the note.}
    The aim of the paper is conceptual clarification and methodological guidance. We focus on closed-form and low-dimensional examples, together with implementation guidance. A comprehensive numerical analysis is left for future work.

    \paragraph{Outline.}
    The remainder of the introduction fixes notation and presents two motivating examples: inconsistencies in projected cash-flows arising from mixing different numeraires, and the effect of ignoring the trading activity implicit in risk-neutral replication. Section~\ref{sec:liqui:expected-cashflows} then compares different notions of expected cash-flows and shows why stochastic payment times can make them misleading for liquidity forecasting. Section~\ref{sec:liqui:funding-sensitivities} investigates funding sensitivities as a replication-consistent representation of future liquidity requirements. Section~\ref{sec:liqui:replication-consistent-future-liquidity-requirement} extracts scenario-wise rebalancing increments of funding sensitivities to obtain replication-consistent liquidity requirements and introduces a settlement-lag \emph{liquidity valuation adjustment} measuring the cost of timing gaps.


\subsection{Setup and Notation}
\label{sec:liqui:intro:setup}

\paragraph{Probability space and information.}
We work on a filtered probability space
$(\Omega,\mathcal{F},(\mathcal{F}_t)_{t\ge 0},\mathbb{P})$.
All cash-flows and asset prices are adapted to $(\mathcal{F}_t)$.
We consider a finite time grid
\[
0=t_0 < t_1 < \dots < t_n,
\]
which represents the (model) re-hedging times. Contractual payment times are assumed to lie on this grid. We use $t$ for the running time and $T$ for a generic maturity, both assumed to lie on the time-grid.

\paragraph{Cash-flow streams.}
A (domestic-currency) cash-flow paid at time $t_{i}$ is represented by an $\mathcal{F}_{t_{i}}$-measurable random variable $X_{i}$.
A contract with multiple payments is represented by the stream $(X_{i})_{i=1,\ldots,n}$.
If payments occur at a stopping time $\tau\in\{t_1,\dots,t_n\}$, we will use the decomposition
\[
X_{i} \indicatorfcn_{\{\tau=t_{i}\}}, \qquad i=1,\ldots,n,
\]
where $\indicatorfcn$ denotes the indicator function.

\paragraph{Tradable numeraires and martingale measures.}
Let $N=(N(t))_{t\in[0,t_n]}$ be a strictly positive numeraire process (a tradable asset).
We assume absence of arbitrage and that prices are modeled such that there exists an equivalent martingale measure
$\mathbb{Q}^{N}$ under which all traded asset prices, discounted by $N$, are martingales.
For two numeraires $N_1,N_2$ with associated measures $\mathbb{Q}^{N_1},\mathbb{Q}^{N_2}$ we use the standard
change-of-numeraire identity
\begin{equation}
\label{eq:setup:numerairechange}
\frac{d\mathbb{Q}^{N_2}}{d\mathbb{Q}^{N_1}}\Big|_{\mathcal{F}_t}
=
\frac{N_2(t)}{N_1(t)}\,\frac{N_1(0)}{N_2(0)} \text{.}
\end{equation}

\paragraph{Risk-neutral valuation of cash-flow streams.}
For an $\mathcal{F}_{t_{i}}$-measurable payoff $X_{i}$ paid at $t_{i}$, its time-$t$ value ($t\le t_{i}$) is
\[
V_{i}(t)
=
N(t)\,\mathrm{E}^{\mathbb{Q}^{N}}\!\left(\frac{X_{i}}{N(t_{i})} \,\middle|\, \mathcal{F}_t\right).
\]
For a stream $(X_{i})_{i=1,\ldots,n}$ we write, on the grid,
\begin{equation}
\label{eq:setup:streampricing}
V(t_j)
=
N(t_j)\,\mathrm{E}^{\mathbb{Q}^{N}}\!\left(\sum_{i=j+1}^{n}\frac{X_{i}}{N(t_{i})} \,\middle|\, \mathcal{F}_{t_j}\right).
\end{equation}
By construction, the present value $V(0)$ is independent of the particular choice of numeraire $N$ (as long as the model is internally consistent and arbitrage-free), even though undiscounted expectations such as $\mathrm{E}^{\mathbb{Q}^{N}}(X_{i})$ may depend on $N$.

\paragraph{Discount factors and funding instruments.}

We denote by $P(T;t)$ the time-$t$ value of a zero-coupon bond maturing at $T$, and by $P(T)$ the corresponding value process. When distinguishing OIS discounting and a funding curve, we write
$P^{\circ}(T;t)$ for an OIS (or collateral) discount factor and $P^{f}(T;t)$ for a desk-specific funding discount factor (or funding bond). The precise curve construction is not needed for the conceptual arguments; it suffices that $P^{f}(T)$ is treated as a tradable (or model-implied) funding instrument within the valuation framework.

\paragraph{Spread/intensity notation and interpretation.}

Throughout this note we use the symbol $\lambda$ for a generic \emph{exponential spread} (attenuation rate) that enters valuations through factors of the form $\exp(-\int \lambda)$.
Depending on the application, $\lambda$ admits two interpretations:
\begin{enumerate}[(i)]
    \item as a \emph{survival/default intensity} used to model conditional payments via survival factors
    $\exp\!\left(-\int_0^{t_{i}}\lambda(u)\,du\right)$, and
    \item as a \emph{funding spread curve} defining a funding discount factor relative to an OIS (or collateral) curve, e.g.
    \[
        P^{f}(T;t)=P^{\circ}(T;t)\exp\!\left(-\int_{t}^{T}\lambda(u;t)\,du\right).
    \]
\end{enumerate}
In reduced-form credit modeling these interpretations coincide under additional assumptions
(e.g., particular recovery conventions and absence of liquidity premia), in which case $\lambda$ may be viewed as a \emph{market-implied intensity}. Otherwise, $\lambda$ should be read as a market-implied spread parameter that affects discounting in the same exponential way.

\paragraph{Cumulative spread / intensity.}
We parameterize exponential spread effects via a cumulative spread function
\begin{equation}
\label{eq:setup:cumulativespread}
\Lambda(t,T) \ := \ \int_{t}^{T} \lambda(u;t)\,du,
\end{equation}
where $\lambda(\cdot;t)$ denotes a (possibly time-$t$ dependent) instantaneous spread/intensity term structure. For grid times $t_{i}$ we write the shorthand
\[
\Lambda_{i} \ := \ \Lambda(0,t_{i}).
\]
If $\lambda$ is taken piecewise constant on the grid (as used in some discrete arguments below), then $\Lambda_{i}$ can be represented as a sum of interval averages, e.g.~$\Lambda_{i}=\sum\limits_{j=0}^{i-1}\lambda_j\,\Delta t_j$ with $\Delta t_j=t_{j+1}-t_j$.

\paragraph{Replication viewpoint and hedge ratios (sensitivities).}
We interpret the valuation process $V(t)$ as the value of a self-financing replication strategy in a chosen basis of traded instruments.
Whenever $V(t)$ is differentiable with respect to the time-$t$ price of a traded instrument $A(t)$, we denote the corresponding hedge ratio by
\begin{equation}
\label{eq:setup:hedgeratio}
\phi_{A}[V](t) \ :=\ \frac{\partial V(t)}{\partial A(t)}.
\end{equation}
In particular, for funding instruments $P^{f}(T;t)$ and (if applicable) collateral accounts $N^{C_k}(t)$ we use
\[
\phi_{P^{f}(T)}[V](t) := \frac{\partial V(t)}{\partial P^{f}(T;t)},
\qquad
\phi_{C_k}[V](t) := \frac{\partial V(t)}{\partial N^{C_k}(t)}.
\]
These hedge ratios encode the \emph{replication-consistent} decomposition of the position into funding and collateral sources.

\paragraph{Contractual cash-flows vs.\ replication-generated funding cash increments.}
A central distinction in this note is between:
\begin{itemize}
    \item \emph{contractual cash-flows} $(X_{i})$ specified by the product (paid at the contractual times), and
    \item \emph{replication-generated funding cash increments} $(\delta_{i})$ arising from rebalancing the hedge over the grid.
\end{itemize}
We represent the (scenario-wise) funding-rebalancing increment in a source $A$ over $(t_{i-1},t_{i}]$ by
\begin{equation}
\label{eq:setup:delta}
\delta_{A}[V](t_{i})\ :=\ \phi_{A}[V](t_{i})-\phi_{A}[V](t_{i-1}).
\end{equation}
In the reduced single-source setup, the increment in the shortest-maturity funding instrument is identified with the model-implied net funding cash increment over the time-step:
\begin{equation*}
    \delta_{i} \ :=\ \delta_{P^{f}(t_{i})}[V](t_{i}) \text{.}
\end{equation*}
We use the sign convention that $\delta_{i}>0$ denotes a net cash inflow available at $t_{i}$ in the frictionless baseline,
and $\delta_{i}<0$ denotes a net cash outflow to be funded at $t_{i}$.
For convenience we write $\delta_{i}^{+} := \max(\delta_{i},0)$ and $\delta_{i}^{-}:=\max(-\delta_{i},0)$.

\medskip

With this notation in place, the subsequent sections clarify why static expectations of contractual cash-flows
(e.g.\ $\mathrm{E}^{\mathbb{Q}^{N}}(X_{i})$) may be inconsistent objects for liquidity forecasting,
whereas the replication-consistent quantities $(\delta_{i})$ derived from hedge ratios provide a coherent basis for
scenario-wise funding requirements.

    \subsection{Inconsistent Expected Cash-Flows due to Measure Change}

	A common approach for determining the expectation of a future cash-flow is to use an equivalent martingale measure $\mathbb{Q}$ as an approximation for the expectation under $\mathbb{P}$. Such an approach may be justified for the projection on short time-horizons, as the change of measure is given by a drift.

	For derivative valuation it is common to use different equivalent martingale measure $\mathbb{Q}^{N}$ with different numeraires $N  =  N_{1}$, $N = N_{2}$. The motivation behind this is sometimes rooted in differences in numerical accuracy or the availability of analytic formulas.

	Using different equivalent martingale measures $\mathbb{Q}^{N_{1}}$, $\mathbb{Q}^{N_{2}}$ may give inconsistent expected cash-flows that cannot be consistently aggregated.

	\subsubsection{Example}

	Let $P(T)$ denote the zero coupon bond  maturing in time $T$.\footnote{We follow the notation of \cite{FriesLectureNotes2007}.}
	Let $L(T_{1},T_{2}) = \frac{P(T_{1})-P(T_{2})}{(T_{2}-T_{1}) P(T_{2})}$ denote the corresponding simple forward rate. Consider the expected cash-flow of a forward rate $L(T_{1},T_{2};t)$ fixed in $t = T_{1}$, paid in $T_{1}$, under the two measures $\mathbb{Q}^{P(T_{1})}$ and $\mathbb{Q}^{P(T_{2})}$. Note that the forward rate is paid at the beginning of the period.

    Using \eqref{eq:setup:numerairechange} with $N_{1}=P(T_{2})$, $N_{2}=P(T_{1})$ at time $t=T_{1}$, we obtain the corresponding expectations as
	\begin{align*}
	 	\mathrm{E}^{\mathbb{Q}^{P(T_{2})}}\left(  L(T_{1},T_{2};T_{1}) \right)
		& \ = \
		\mathrm{E}^{\mathbb{Q}^{P(T_{2})}}\left(  \frac{L(T_{1},T_{2};T_{1}) P(T_{2};T_{1})}{P(T_{2};T_{1})} \right)
		\ = \ L(T_{1},T_{2};t_{0}) 
		\intertext{and}
		\mathrm{E}^{\mathbb{Q}^{P(T_{1})}}\left(  L(T_{1},T_{2};T_{1}) \right)
		& \ = \
		\mathrm{E}^{\mathbb{Q}^{P(T_{1})}}\left(  \frac{L(T_{1},T_{2};T_{1})}{P(T_{2};T_{1})} (1 + L(T_{1},T_{2};T_{1}) (T_{2}-T_{1})) \right) \\
		& \ = \
		\mathrm{E}^{\mathbb{Q}^{P(T_{2})}}\left(  L(T_{1},T_{2};T_{1}) (1 + L(T_{1},T_{2};T_{1}) (T_{2}-T_{1})) \right) \\
		& \ = \
		\mathrm{E}^{\mathbb{Q}^{P(T_{2})}}\left(  L(T_{1},T_{2};T_{1}) \right) +  (T_{2}-T_{1}) \mathrm{E}^{\mathbb{Q}^{P(T_{2})}}\left(  ( L(T_{1},T_{2};T_{1}) )^{2} \right) \\
		& \ = \
		L(T_{1},T_{2};t_{0}) +  (T_{2}-T_{1}) \mathrm{E}^{\mathbb{Q}^{P(T_{2})}}\left(  ( L(T_{1},T_{2};T_{1}) )^{2} \right)
	\end{align*}
	If these two expectations belong to a long and a short position, a netting would leave us with the residual
	\begin{equation}
        \label{eq:liqui:measure-mix-residual}
		(T_{2}-T_{1}) \mathrm{E}^{\mathbb{Q}^{P(T_{2})}}\left(  ( L(T_{1},T_{2};T_{1}) )^{2} \right) \text{,}
	\end{equation}
	which arises solely due to the change of measure. Assuming that under the $T_{2}$-forward measure the forward rate satisfies a Black-type lognormal model
    \begin{equation*}
        L(T_{1},T_{2};T_{1}) \ = \ L_{0}\exp\!\left(-\tfrac12 \sigma_{\text{L}}^{2}T_{1}+\sigma_{\text{L}} W(T_{1}) \right)
    \end{equation*}
    the residual term \eqref{eq:liqui:measure-mix-residual} becomes $(T_{2}-T_{1}) \ L_{0}^{2} \ e^{\sigma_{\text{L}}^{2}T_{1}}$, which is exponentially increasing in variance and time horizon $T_{1}$.\footnote{For related numerical analysis on the impact of the measure choice see~\cite{ufimtsev2025liquidity}}.

    \bigskip

    \subsection{Liquidity forecasting under replication}
    
    Another inconsistency in liquidity forecasting may arise from ignoring the trading activity related to a risk-neutral replication activity. Assume a trader is managing a delta-neutral portfolio of an option $C$ and a stock $S$ being $C(t_0) - \phi(t_{0}) S(t_{0})$, where $\phi(t_{0})$ is the hedge ratio (the delta). Assessing this position at maturity $T$ in a future scenario $\omega$ may show a net cash-flow $C(T,\omega) - \phi(t_{0}) S(T,\omega)$, ignoring that the trader has made self-financing dynamic adjustments to the hedge ratio $\phi(t_{0}) \rightarrow \phi(t,\omega)$.

    \medskip

    \paragraph{Key principle (liquidity forecasting under replication):}
    If a product is valued by a self-financing replication strategy, liquidity-relevant “cash-flow forecasts” should be derived from the funding instruments’ hedge ratios and their rebalancing, not from static undiscounted expectations of contractual legs—especially when payment times are stochastic.

	\clearpage
	\section{Expected Cash-Flows}
    \label{sec:liqui:expected-cashflows}

	Let $X_{i}$ denote an $\mathcal{F}_{t_{i}}$-measurable random variable representing a cash-flow in time $t_{i}$. The natural definition of an expected cash flow is to consider the $\mathbb{P}$- or $\mathbb{Q}$-expectation:
	\begin{equation*}
		EC^{\mathbb{P}}\left( X_{i} \right) \ := \ \mathrm{E}^{\mathbb{P}}\left( X_{i} \right) \text{,} \qquad
		EC^{\mathbb{Q}}\left( X_{i} \right) \ := \ \mathrm{E}^{\mathbb{Q}}\left( X_{i} \right) \text{.}
	\end{equation*}
	For general stochastic cash-flows, these definitions depend on the probability measure.
    
    A somewhat measure-independent alternative is given by
	\begin{equation}
        \label{eq:liqui:ecadvalueperbond}
        EC^{1/P}\left( X_{i} \right) \ := \ N(t_{0}) \mathrm{E}^{\mathbb{Q}^{N}}\left( \frac{X_{i}}{N(t_{i})} \right) \cdot \frac{1}{P(t_{i};t_{0})} \text{,}
	\end{equation}
    where $N$ denotes a chosen numeraire, $\mathbb{Q}^{N}$ the corresponding equivalent martingale measure, and $P(t_{i})$ the zero coupon bond maturing in $t_{i}$. The definition \eqref{eq:liqui:ecadvalueperbond} does not depend on the numeraire, because $N(t_{0}) \mathrm{E}^{\mathbb{Q}^{N}}\left( \frac{X_{i}}{N(t_{i})} \right)$ constitute the risk-neutral valuation of the cash-flow. In other words $EC^{1/P}\left( X_{i} \right)$ is just the risk-neutral valuation of the cash-flow $X_{i}$ expressed in terms of units of the zero coupon bond  $P(t_{i};t_{0})$. With $N := P(t_{i})$ we find from $N(t_{i}) = 1$ and $N(t_{0}) =  P(t_{i};t_{0})$ that $EC^{1/P}\left( X_{i} \right) \ = \ \mathrm{E}^{\mathbb{Q}^{P(t_{i})}}\left( X_{i} \right)$.

    \medskip
    
    We may recover the definition \eqref{eq:liqui:ecadvalueperbond} as being a discounting sensitivity, i.e., a hedge ratio. This view will be important for a more general view on cash-flow forecasting.


	\subsection{Discounting Sensitivities}

	Discounting sensitivities can be re-interpreted as expected cash flows. As discounting sensitivities relate to the risk-neutral replication of the cash-flow, this relation has a fundamental relevance for liquidity management.

	We take a time-discrete view and consider a time discretization $\{ t_{i}  \ \vert \ i = 0,\ldots,n \}$.

	Consider stochastic cash-flows $X_{i}$ that are paid in $t_{i}$ and introduce an additional survival probability that models the payment of $X_{i}$ conditional to survival, where the survival probability is given through the intensity $t \mapsto \lambda(t)$.

    Let $\Delta t_{i}:=t_{i+1}-t_{i}$ and define the cumulative spread to $t_{i}$ by
    \[
    \Lambda_{i} := \Lambda(0,t_{i})=\int_{0}^{t_{i}}\lambda(u)\,du.
    \]
    
    We then have

    \begin{proposition}[Sensitivity as discounted expected Cash-Flow]
        Assume that the cash-flow does not depend on $\lambda$, i.e., $\frac{\partial X_{i}}{\partial \Lambda_{k}} = 0$. We treat the discrete curve nodes $(\Lambda_{i})_{i=1,\ldots,n}$ as independent ``bucket'' parameters (analogous to bumping discount factors by maturity), and we evaluate
        sensitivities at the baseline $\Lambda_{i}=0$.
        Then, the discounting sensitivity agrees with the (analytically) discounted expected cash-flow under terminal measure, i.e., 
        \begin{equation*}
            - \frac{\partial}{\partial \Lambda_{k}} \ N(t_{0}) \ \mathrm{E}^{\mathbb{Q}^{N}}\left( \sum_{i=0}^{n} \frac{X_{i}}{N(t_{i})} \exp\left( - \Lambda_{i} \right) \right) \bigg\vert_{\Lambda_{k} = 0}
            \ = \
            P(t_{k};t_{0}) \ \mathrm{E}^{\mathbb{Q}^{P(t_{k})}}\left( X_{k} \right)
             \text{.}
        \end{equation*}
    \end{proposition}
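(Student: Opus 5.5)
We differentiate under the expectation, term by term, and then apply the change-of-numeraire identity \eqref{eq:setup:numerairechange} to pass from $\mathbb{Q}^{N}$ to the $t_k$-forward measure $\mathbb{Q}^{P(t_k)}$.

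\textbf{Step 1: only the $k$-th term survives.} The nodes $(\Lambda_i)$ are independent bucket parameters, and by assumption $\partial X_i/\partial\Lambda_k=0$. The numeraire $N$ and the measure $\mathbb{Q}^{N}$ are model objects that do not depend on the spread nodes. Hence
\[
\frac{\partial}{\partial\Lambda_k}\frac{X_i}{N(t_i)}\exp(-\Lambda_i)=-\indicatorfcn_{\{i=k\}}\frac{X_k}{N(t_k)}\exp(-\Lambda_k).
\]

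\textbf{Step 2: interchange differentiation and expectation.} The sum is finite, and the dependence on $\Lambda_k$ is through the deterministic smooth factor $\exp(-\Lambda_k)$. So the expectation is linear in $\exp(-\Lambda_k)$ with coefficient $N(t_0)\,\mathrm{E}^{\mathbb{Q}^{N}}(X_k/N(t_k))$. We only need this coefficient to be finite, i.e.\ $X_k/N(t_k)$ to be $\mathbb{Q}^{N}$-integrable, which is implicit since $X_k$ has a value. No dominated-convergence argument is required: we simply pull the deterministic factor out of the expectation. Evaluating at the baseline $\Lambda_k=0$, the left-hand side becomes $N(t_0)\,\mathrm{E}^{\mathbb{Q}^{N}}\!\left(X_k/N(t_k)\right)$.

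\textbf{Step 3: change of numeraire.} Apply \eqref{eq:setup:numerairechange} with $N_1=N$, $N_2=P(t_k)$ on $\mathcal{F}_{t_k}$, and use $P(t_k;t_k)=1$. This gives
\[
N(t_0)\,\mathrm{E}^{\mathbb{Q}^{N}}\!\left(\frac{X_k}{N(t_k)}\right)=P(t_k;t_0)\,\mathrm{E}^{\mathbb{Q}^{N}}\!\left(\frac{X_k}{P(t_k;t_0)}\frac{P(t_k;t_k)}{N(t_k)}N(t_0)\right)=P(t_k;t_0)\,\mathrm{E}^{\mathbb{Q}^{P(t_k)}}(X_k).
\]
Equivalently, this is numeraire invariance of the value, as already noted after \eqref{eq:liqui:ecadvalueperbond}. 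Combining Steps 1--3 gives the claim.

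\textbf{Main obstacle.} The only delicate point is the justification in Step 1 that $N$ and $\mathbb{Q}^{N}$ carry no $\Lambda$-dependence. This holds because the spread enters as an external attenuation factor (survival or funding spread), not through the numeraire. If one instead built $N$ from the funding curve, the sensitivity would pick up extra terms. I would state this explicitly as the interpretation of ``the cash-flow does not depend on $\lambda$'' together with the bucket-parameter convention.
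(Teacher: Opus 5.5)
Your proof is correct and follows the paper's argument. You isolate the $k$-th term via $\partial X_{i}/\partial \Lambda_{k}=0$, evaluate at $\Lambda_{k}=0$ to obtain $N(t_{0})\,\mathrm{E}^{\mathbb{Q}^{N}}\bigl(X_{k}/N(t_{k})\bigr)$, and pass to the $t_{k}$-forward measure. The only difference is cosmetic: the paper simply ``chooses $N=P(t_{k})$'', relying on numeraire invariance, whereas you carry out that same step explicitly with the density \eqref{eq:setup:numerairechange} and $P(t_{k};t_{k})=1$.
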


    \begin{proof}
       	If $\exp\left( - \int_{0}^{t_{i}} \lambda(s) \ \mathrm{d}s \right)$ models a survival probability of the cash flow $X_{i}$, then the value of the series of cash-flows $X_{i}$ is
        \begin{multline*}
        		N(t_{0}) \ \mathrm{E}^{\mathbb{Q}^{N}}\left( \sum_{i=0}^{n} \frac{X_{i}}{N(t_{i})} \exp\left( - \int_{0}^{t_{i}} \lambda(s) \ \mathrm{d}s \right) \right) \\
        		\ = \
        		N(t_{0}) \ \mathrm{E}^{\mathbb{Q}^{N}}\left( \sum_{i=0}^{n} \frac{X_{i}}{N(t_{i})} \exp\left( - \Lambda_{i} \right) \right) \text{.}                        
        \end{multline*}
        Note that, for the baseline scenario $\Lambda_{i}=0$ for all $i$, we recover the risk‑neutral valuation of the cash-flows.
        From the assumption $\frac{\partial X_{i}}{\partial \Lambda_{k}} = 0$ we then have
        \begin{align*}	    
            & - \frac{\partial}{\partial \Lambda_{k}} \ N(t_{0}) \ \mathrm{E}^{\mathbb{Q}^{N}}\left( \sum_{i=0}^{n} \frac{X_{i}}{N(t_{i})} \exp\left( - \Lambda_{i} \right) \right) \Big\vert_{\Lambda_{k} = 0} \\
            & \qquad \ = \
            N(t_{0}) \ \mathrm{E}^{\mathbb{Q}^{N}}\left( \frac{X_{k}}{N(t_{k})} \exp\left( - \Lambda_{k} \right) \right) \Big\vert_{\Lambda_{k} = 0}
            \ = \
            N(t_{0}) \ \mathrm{E}^{\mathbb{Q}^{N}}\left( \frac{X_{k}}{N(t_{k})} \right) \\
            \intertext{and by choosing $N=P(t_{k})$}
            & \qquad \ = \
            P(t_{k};t_{0}) \ \mathrm{E}^{\mathbb{Q}^{P(t_{k})}}\left( X_{k} \right)
             \text{.}
        \end{align*}
    \end{proof}
    	
	\bigskip

	In the above consideration, it is crucial that the cash flow is independent of the intensity process $\lambda$. The relation breaks down, if the cash-flow depends on $\lambda$.

	\subsection{Cash-Flows with Stochastic Time}

	Let $\tau$ denote a stopping time (a stochastic time). For example, $\tau$ could represent the optimal exercise of a multi-callable product (a Bermudan option). Consider the financial product, where a notional flow of $1$ unit of currency is paid at time $\tau$. Assume that $\tau$ can take only the discrete values from $\{ t_{i}, i=0,\ldots.n \}$. The cash-flow in time $t_{i}$ is then given by the indicator function $X_{i} = \indicatorfcn_{\{\tau = t_{i}\}}$. Then, for any probability measure $\mathbb{Q}$, the total sum of the expected cash-flows is one, i.e., we have
	\begin{equation*}
		\sum_{i=0}^{n} \mathrm{E}^{\mathbb{Q}} \left( X_{i} \right) \ = \ 1 \text{.}
	\end{equation*}
	From the natural notion of \textit{expected cash-flows}, it appears as if it is a natural consistency condition to assume that the separate expected cash-flows of a unit cash flow on a stochastic time add to one, if summed over all possible times.

	However, taking the risk‑neutral valuations
	\begin{equation*}
        E_{i} \ := \ EC^{1/P}\left( X_{i} \right) \ = \ 
        N(t_{0}) \mathrm{E}^{\mathbb{Q}^{N}}\left( \frac{X_{i}}{N(t_{i})} \right)
        \cdot \frac{1}{P(t_{i};t_{0})}
	\end{equation*}
	it is not guaranteed that the sum $E_{i}$ equals $1$. This is due to a possible correlation of the indicator function and the interest rate level.
    \paragraph{Remark (When do the replication-relevant timing weights add up?).}
    Let $\tau\in\{t_1,\ldots,t_n\}$ and $X_{i}=\indicatorfcn_{\{\tau=t_{i}\}}$. Using the bond pricing identity
    \[
    P(t_{i};t_0)=N(t_0)\,\mathrm{E}^{\mathbb{Q}^{N}}\!\left(\frac{1}{N(t_{i})}\right),
    \]
    we can rewrite
    \begin{equation}
    \label{eq:liqui:additivity-condition}
    E_{i}
    =
    \frac{\mathrm{E}^{\mathbb{Q}^{N}}\!\left(\frac{X_{i}}{N(t_{i})}\right)}
    {\mathrm{E}^{\mathbb{Q}^{N}}\!\left(\frac{1}{N(t_{i})}\right)}.
    \end{equation}
    Hence, if $X_{i}$ (equivalently, the stopping time $\tau$) is independent of $\frac{1}{N(t_{i})}$
    under $\mathbb{Q}^{N}$ (no timing--rate dependence), then
    \[
    E_{i}=\mathrm{E}^{\mathbb{Q}^{N}}(X_{i})
    \quad\text{and therefore}\quad
    \sum_{i=1}^{n} E_{i}=1.
    \]
    In general, dependence between timing indicators and discounting/funding implies $\sum_{i=1}^{n}E_{i}\neq 1$.
    
	\subsubsection{Example: Risk-Neutral Valuation of Indicators / Correlation}

	Consider two payment times $T_{1}$ and $T_{2}$ and the payoffs
	\begin{equation*}
		X_{1} \ := \ \indicatorfcn_{\{ L(T_{1},T_{2};T_{1}) > K \}} \quad \text{in $T_{1}$} \qquad \text{and} \qquad
		X_{2} \ := \ 1 - X_{1} \quad \text{in $T_{2}$.}
	\end{equation*}
	The payment of $X_{2}$ is that of a digital floorlet. The payment of $X_{1}$ is that of a digital caplet in arrears. For $E_{1} + E_{2}$ we have
	\begin{align*}
		E_{1} + E_{2} & \ = \ \mathrm{E}^{\mathbb{Q}^{P(T_{1})}}\left( X_{1} \right) + \mathrm{E}^{\mathbb{Q}^{P(T_{2})}}\left( X_{2} \right) \\
		& \ = \ \mathrm{E}^{\mathbb{Q}^{P(T_{2})}}\left( X_{1} \frac{P(T_{1};T_{1})}{P(T_{2};T_{1})} \frac{P(T_{2};t_{0})}{P(T_{1};t_{0})} \right) + \mathrm{E}^{\mathbb{Q}^{P(T_{2})}}\left( X_{2} \right) \\
		& \ = \ \mathrm{E}^{\mathbb{Q}^{P(T_{2})}}\left( X_{1} \frac{1 + L(T_{1},T_{2};T_{1}) (T_{2}-T_{1})}{1 + L(T_{1},T_{2};t_{0}) (T_{2}-T_{1})} \right) + \mathrm{E}^{\mathbb{Q}^{P(T_{2})}}\left( X_{2} \right) \\
		& \ = \ \mathrm{E}^{\mathbb{Q}^{P(T_{2})}}\left( X_{1} \frac{(L(T_{1},T_{2};T_{1})-L(T_{1},T_{2};t_{0})) (T_{2}-T_{1})}{1 + L(T_{1},T_{2};t_{0}) (T_{2}-T_{1})} \right) + \mathrm{E}^{\mathbb{Q}^{P(T_{2})}}\left( X_{1} + X_{2} \right)
	\end{align*}
	Thus, we have
	\begin{align*}
		E_{1} + E_{2} - 1 & \ = \ \mathrm{E}^{\mathbb{Q}^{P(T_{2})}}\left( X_{1} \frac{(L(T_{1},T_{2};T_{1})-L(T_{1},T_{2};t_{0})) (T_{2}-T_{1})}{1 + L(T_{1},T_{2};t_{0}) (T_{2}-T_{1})} \right) \\
		& \ = \ \mathrm{E}^{\mathbb{Q}^{P(T_{2})}}\left( X_{1} (L(T_{1},T_{2};T_{1})-L(T_{1},T_{2};t_{0})) (T_{2}-T_{1}) \right) \frac{P(T_{2};t_{0})}{P(T_{1};t_{0})} \text{.}
	\end{align*}
	Note that $\mathrm{E}^{\mathbb{Q}^{P(T_{2})}}\left( (L(T_{1},T_{2};T_{1})-L(T_{1},T_{2};t_{0})) (T_{2}-T_{1}) \right) = 0$, but choosing $K = L(T_{1},T_{2};t_{0})$ we see that $\mathrm{E}^{\mathbb{Q}^{P(T_{2})}}\left( X_{1} (L(T_{1},T_{2};T_{1})-L(T_{1},T_{2};t_{0})) (T_{2}-T_{1}) \right) > 0$ if there is a positive probability that $L(T_{1},T_{2};T_{1})$ is above $L(T_{1},T_{2};t_{0})$. Obviously, it is the correlation of the indicator $X_{1}$ with the discount rate $L(T_{1},T_{2};T_{1})$ that is leading to $E_{1} + E_{2} \neq 1$.

    \paragraph{Rough Assessment of the Magnitude}

    Assume that under the $T_{2}$-forward measure the forward rate satisfies a Black-type lognormal model, $L(T_{1},T_{2};T_{1}) = L_{0} \ \exp\!\left(-\tfrac12 \sigma_{\text{L}}^{2}T_{1}+\sigma_{\text{L}} W(T_{1}) \right)$, so that
    \begin{equation*}
        \begin{split}
            \mathrm{E}^{\mathbb{Q}^{P(T_{2})}}\!\left(L(T_{1},T_{2};T_{1})\mathbf{1}_{\{L(T_{1},T_{2};T_{1})>K\}}\right) & = L_{0}\Phi(d_{+}), \\
            \mathbb{Q}^{P(T_{2})}\!\left(L(T_{1},T_{2};T_{1})>K\right) & = \Phi(d_{-}),
        \end{split}
    \end{equation*}
    with $d_{\pm} \ := \ \frac{\log(L_{0}/K) \pm \tfrac12 \sigma_{\text{L}}^{2} T_{1}}{\sigma_{\text{L}}\sqrt{T_{1}}}$. We then find
    \begin{align*}
        & E_{1}+E_{2}-1 \ = \
        \frac{L_{0} \cdot (T_{2}-T_{1})}{1 + L_{0} \cdot (T_{2}-T_{1})}\,\bigl(\Phi(d_{+})-\Phi(d_{-})\bigr) \text{,} \\
        \intertext{and, in particular, for $K = L_{0}$ we find}
        & \ = \ 
        \frac{L_{0} \cdot (T_{2}-T_{1})}{1 + L_{0} \cdot (T_{2}-T_{1})}\,\bigl(\Phi(+\frac{\sigma_{\text{L}}}{2}\sqrt{T_{1}})-\Phi(-\frac{\sigma_{\text{L}}}{2}\sqrt{T_{1}})\bigr) \ \approx \ \frac{L_{0} \cdot (T_{2}-T_{1})}{1 + L_{0} \cdot (T_{2}-T_{1})}\, \frac{\sigma_{\text{L}}\sqrt{T_{1}}}{\sqrt{2 \pi}} \text{.}
    \end{align*}
    Consequently, for large volatility we have a relative inconsistency of $\frac{L_{0} \cdot (T_{2}-T_{1})}{1 + L_{0} \cdot (T_{2}-T_{1})}$ that grows towards 100~\% with the time difference.\footnote{
        Under a normal (Bachelier) approximation $L = L_{0} + \sigma_{\text{N}} W(T_{1})$, the ATM inconsistency becomes $\frac{T_{2}-T_{1}}{1 + L_{0} \cdot (T_{2}-T_{1})}\, \frac{\sigma_{\text{N}}\sqrt{T_{1}}}{\sqrt{2 \pi}}$, which agrees to first order with the Black-based expression via $\sigma_{\text{N}} \approx L_{0} \ \sigma_{\text{L}}$.}

	\subsection{Inconsistency of Expected Cash-Flows and Liquidity Risk Management}

	A risk‑neutral valuation is given by the value of the replication portfolio.\footnote{Under assumptions like completeness and absence of arbitrage.} The partial derivative of the valuation with respect to a basis of assets gives the composition of the replication portfolio in terms of the values of these assets.
	
	If we extend the set of zero coupon bonds $\{ P(T_{i}) \ \vert \ i=0,\ldots,n \}$ with assets that are orthogonal (i.e., their sensitivity with respect to zero bonds is zero), then the sensitivity with respect to $P(T_{i})$ represent the number of units to hold in the respective zero coupon bond to replicate the cash-flow.

    The sensitivities of a portfolio with respect to the funding instruments, say here, the zero coupon bonds, gives the initial term-funding requirement for the portfolio.

    From~\eqref{eq:liqui:ecadvalueperbond} it appears as if this term-funding will just match the associated expected cash-flow. However, the two concepts are inconsistent for non-trivial financial derivatives.

    The inconsistency comes from the expected cash-flows being a static view on future model cash-flows while sensitivities are the initial cost of a self-financing trading strategy. The following elementary example illustrates this.
    
	\subsubsection{Example: Hedge Portfolio of a Stochastic Time Cash-Flow}
    \label{sec:liqui:example:twotimetwostatetwobond}

	Consider two future times $t_{1}$, $t_{2}$ and a stochastic payment time $\tau$ taking values from $\{ t_{1}, t_{2} \}$. We assume that at $t_{1}$ (or shortly before) the state of $\tau$ is known, i.e., it is a stopping time. Furthermore, assume that the probability space can attain only two values $\omega_{1}$, $\omega_{2}$ with equal probability and that $\tau(\omega_{1}) = t_{1}$, $\tau(\omega_{2}) = t_{2}$.

	We now consider a payoff of $1$ unit of currency occurring at the stochastic time $\tau$.

	The $\mathbb{P}$-expected cash-flow at $t_{1}$ is $\mathbb{P}(\tau = t_{1}) = 0.5$, the $\mathbb{P}$-expected cash-flow at $t_{2}$ is $\mathbb{P}(\tau = t_{2}) = 0.5$.

	The risk‑neutral replication strategy of the stochastic payoff is a portfolio of a $t_{1}$-Bond $P(t_{1})$ and a $t_{2}$-Bond $P(t_{2})$
	\begin{equation*}
		a P(t_{1};t_{1}, \omega_{1})  +  b P(t_{2};t_{1}, \omega_{1}) \text{.}
	\end{equation*}
	The coefficients are determined by the value of the two bonds in the two possible states $\omega_{1}$, $\omega_{2}$. In state $\omega_{1}$ the replication portfolio should agree with $P(t_{1})$, in state $\omega_{2}$ the replication portfolio should agree with $P(t_{2})$, that is 
	\begin{equation}
        \label{eq:liqui:twostatetwotimetwobondhedge}
        \begin{split}            
    		P(t_{1};t_{1}, \omega_{1}) & \ = \ a P(t_{1};t_{1}, \omega_{1})  +  b P(t_{2};t_{1}, \omega_{1}) \\
    		P(t_{2};t_{1}, \omega_{2}) & \ = \ a P(t_{1};t_{1}, \omega_{2})  +  b P(t_{2};t_{1}, \omega_{2})
            \end{split}
	\end{equation}    
	When the payment time $\tau$ becomes known, the replication portfolio is liquidated and the respective bond is acquired, replicating the desired cash-flow.

	The solution of \eqref{eq:liqui:twostatetwotimetwobondhedge} gives us the composition of the replication portfolio. With $x_{1} = P(t_{1};t_{1}, \omega_{1})$, $x_{2} = P(t_{2};t_{1}, \omega_{1})$,
	$y_{1} = P(t_{1};t_{1}, \omega_{2})$, $y_{2} = P(t_{2};t_{1}, \omega_{2})$
	the solution of \eqref{eq:liqui:twostatetwotimetwobondhedge} is given by
    \begin{equation*}
        a = \frac{ y_2 (x_2 - x_1) }{ y_1 x_2 - y_2 x_1} \text{,} \qquad
        b = \frac{ x_1 (y_1 - y_2) }{ y_1 x_2 - y_2 x_1} \text{.}
    \end{equation*}

	With bond values being for example
	\begin{align}
		x_1 &= 1.0, \quad x_2 = 0.8, \\
		y_1 &= 1.0, \quad y_2 = 0.9
	\end{align}
	we get
	\begin{align}
		a &= \frac{y_2 (x_2 - x_1)}{y_1 x_2 - y_2 x_1}
		= \frac{0.9 \cdot (0.8 - 1.0)}{1.0 \cdot 0.8 - 0.9 \cdot 1.0}
		= \frac{-0.18}{-0.1}
		= 1.8
		\\
		b &= \frac{x_1 (y_1 - y_2)}{y_1 x_2 - y_2 x_1}
		= \frac{1.0 \cdot (0.9 - 0.8)}{1.0 \cdot 0.8 - 0.9 \cdot 1.0}
		= \frac{0.1}{-0.1}
		= -1.0
	\end{align}

	This is a long-short position of the two bonds. The risk‑neutral trading strategy is to buy a bond with maturity $t_{1}$ guaranteeing the cash-flow of $1.8$ in $t_{1}$ and sell a bond with maturity $t_{2}$.
	In the state $\omega_{1}$ we use $0.8$ of this cash flow to buy back the $t_{2}$-bond and are left with $1$ unit of cash flow in $t_{1}$. In the state $\omega_{2}$ we use cash-flow of $1.8$ to buy two units of the $t_{2}$ bond, being left with one such bond that matches the cash-flow in $t_{2}$.

	The $\mathbb{P}$-expected cash-flows of the (dynamic) replication portfolio do not match the $\mathbb{P}$-expected cash-flow of our stochastic payoff, but the (dynamic) trading strategy allows to replicate the cash-flows exactly. More strikingly, although the combined position, a short position of stochastic time cash-flow and a long position of the replication portfolio, produces zero net cash flow in every scenario $\{ \omega_{1}, \omega_{2} \}$, it exhibits non-zero expected cash flows under $\mathbb{P}$. The $\mathbb{P}$-expected cash flows of the combined position are
	\begin{equation*}
		1.8 - 1.0 \cdot \mathbb{P}(\{ \omega_{1} \})  \quad \text{in $t_{1}$} \qquad -1.0 - 1.0 \cdot \mathbb{P}(\{ \omega_{2} \}) \quad \text{in $t_{2}$,}
	\end{equation*}
	so for equal probabilities we would get
	\begin{equation*}
		+ 1.3 \quad \text{in $t_{1}$} \qquad -1.5 \quad \text{in $t_{2}$.}    
	\end{equation*}
	
	Note that in this example, there is no liquidity risk, as long as we can liquidate the $t_{2}$-bond in $t_{1}$. Being unable to liquidate the $t_{2}$-bond is a stress scenario, which we can apply to the replication strategy, but for the baseline scenario we have matched cash-flows despite the non-zero expected cash-flows.
    
    \clearpage
    \section{Funding Sensitivities as Replication-Consistent Liquidity Measures}
    \label{sec:liqui:funding-sensitivities}

    The example of a stochastic time cash-flow shows a fundamental problem in the use of expected cash-flows: The static expected cash-flows ignore the temporal effect of an implicit (risk-neutralizing) trading strategy; they do not account for a trading activity like rolling the $t_{1}$-bond into a $t_{2}$-bond, depending on some event $\tau$, as illustrated in Section~\ref{sec:liqui:example:twotimetwostatetwobond}. Sensitivities, on the other hand, account for trading activities as risk‑neutral replication is defined by matching sensitivities.

    \paragraph{From static expected cash-flow to liquidity rebalancing of \emph{funding hedge ratios}.}
    Section~\ref{sec:liqui:expected-cashflows} took a \emph{contractual-leg} perspective and discussed expected cash-flows
    $(\mathrm{E}^{\mu}(X_{i}))$ as descriptive projections.
    From here on we adopt a \emph{replication} perspective:
    when we refer to \emph{liquidity-relevant cash movements} we mean the cash movements implied by the
    self-financing hedging strategy.
    Consequently, liquidity forecasting is based on \emph{funding hedge ratios} (discounting/funding sensitivities)
    and their rebalancing over the hedge grid, rather than on static undiscounted expectations of contractual legs.
    The next sections make this precise by relating funding sensitivities to funding requirements and by extracting scenario-wise funding increments from a valuation model.

    \medskip

    We consider sensitivities with respect to the \textit{funding spread}. Let $T \mapsto \lambda(T;t)$ denote the funding spread curve for maturities $T$ observed in $t$, i.e., the value of the funding bond with maturity $T_{i}$ is given by
    \begin{equation}
        \label{eq:liqui:fundingbond}
        P^{f}(T_{i};t) \ = \ P^{\circ}(T_{i};t) \ \exp\left( - \int_{t}^{T_{i}} \lambda(s;t) \ \mathrm{d}s \right)
        \ = \ P^{\circ}(T_{i};t) \ \exp\!\left(-\Lambda(t,T_{i})\right) \text{,}
    \end{equation}
    with $\Lambda(t,T_{i}) \ := \ \int\limits_{t}^{T_{i}}\lambda(u;t)\,du$.

    \medskip

    With respect to stochastic cash-flows we may distinguish stochasticity in the size of the cash-flow and stochasticity in the time of the cash-flow. The two aspects behave differently when considering expected future cash-flows in the risk-neutral replication setting.
    
    In other words, we consider the sequence of cash-flows
    \begin{equation}
        X_{i} \ \indicatorfcn_{\{\tau = T_{i}\}} \text{,}
    \end{equation}
    where $X_{i}$ is an $\mathcal{F}_{T_{i}}$-measurable random variable and $\tau$ is a stopping time (a random time) with values in $\{ T_{1}, \ldots, T_{n} \}$.

    \subsection{Stochasticity in Size}

    The stochasticity in size can be analyzed by expectations and quantiles (e.g., valuations under stress scenarios) of the fixed-time cash-flow. There is a consistent definition of expected cash-flow that matches the hedge ratios of a replication portfolio.

    Let $V_{i}$ denote the value of the cash-flow $X_{i}$ occurring at time $T_{i}$ (independent of $\Lambda_{i}$). Using the $T_{i}$-terminal measure $\mathbb{Q}^{P^{f}(T_{i})}$ for the cash-flow $X_{i}$ we find that the sensitivity of the risk-neutral valuation matches the $\mathbb{Q}^{P^{f}(T_{i})}$-expected cash-flow, i.e.,
    \begin{align*}
        -\frac{\partial}{\partial \Lambda_{i}} V_{i}
        & \ = \ -\frac{\partial}{\partial \Lambda_{i}} \ \left( \mathrm{E}^{\mathbb{Q}^{N}} \left( \frac{X_{i}}{N(T_{i})} \right) \cdot N(0) \right) \\
        & \ = \ -\frac{\partial}{\partial \Lambda_{i}} \ \left( \mathrm{E}^{\mathbb{Q}^{P^{f}(T_{i})}} \left( X_{i} \right) \cdot P^{f}(T_{i};0) \right)        
        \ \stackrel{\eqref{eq:liqui:fundingbond}}{=} \ \mathrm{E}^{\mathbb{Q}^{P^{f}(T_{i})}} \left( X_{i} \right) \cdot P^{f}(T_{i};0) \text{.}
    \end{align*}
    Consequently, neutralizing the funding sensitivities will balance out the $\mathbb{Q}^{P^{f}(T_{i})}$-expected cash-flows.
      
    \subsection{Stochasticity in Time}

    With respect to stochasticity in time, assessing expectations of the individual cash-flows ($X_{i}$) can be misleading.
    
    Consider a constant cash-flow $X$ occurring at a stochastic time $\tau \in \{ T_{i}, T_{j} \}$ which depends on the funding spread or funding bond, say
    \begin{equation*}
        \tau \ = \ \indicatorfcn_{\{Z > \Lambda_{i}\}} \ T_{i} + \left(1 - \indicatorfcn_{\{Z > \Lambda_{i}\}} \right) \ T_{j} \text{,}
    \end{equation*}
    where $Z$ is some $\mathcal{F}_{T_{i}}$-measurable random variable, which we assume to be independent of $\Lambda_{i}$, for simplicity.
    
    Let $V$ denote the value of paying $X$ in time $\tau$. For the funding sensitivity of this valuation we find
    \begin{align*}
        -\frac{\partial}{\partial \Lambda_{i}} V \ = \ & X \cdot \mathbb{Q}^{P^{f}(T_{i})} \left( Z > \Lambda_{i} \right) \cdot P^{f}(T_{i};0) \ + \ X \cdot f_{Z}(\Lambda_{i}) \cdot \left( P^{f}(T_{i};0) - P^{f}(T_{j};0) \right) \text{,}
    \end{align*}
    where $f_{Z}$ is the probability density of $Z$ under $\mathbb{Q}^{P^{f}(T_{i})}$.
    
    Here, the first expression corresponds to the $\mathbb{Q}^{P^{f}(T_{i})}$-expected cash-flow: the cash-flow is multiplied with the probability of a payment, then discounted. Hence, the second expression represents a mismatch of the funding risk sensitivity and the expected cash flows. The mismatch is associated with the sensitivity of the cash-flow time on the funding spread: it is the size of the cash-flow multiplied with change in probability (which is the density).

    A risk‑neutral replication strategy will compensate the risk associated with a change of the payment time. Thus, a perfectly dynamically hedged position will show unbalanced expected cash-flows, even though the self-financing trading strategy will result in perfectly matched cash-flows.


    \clearpage
    
    \section{Obtaining Stochastic Future Funding Requirements from a Valuation Model}
    \label{sec:liqui:replication-consistent-future-liquidity-requirement}

    The sensitivities of a risk‑neutral valuation allow to obtain precise information on cash-flows taking into account the trading activity associated with the risk-neutral replication. This is an advantage compared to the calculation of expected cash-flows, which assume that the portfolios remain static. In addition we may obtain meaningful information in scenarios.

    \subsection{Liquidity Frictions}

    A risk‑neutral valuation model makes idealized assumptions, which may be critical: 
    
    \paragraph{Time Resolution:} Using a time resolution of 1 day, the model usually only reveals a daily net cash-flow: an inflow is matched to an outflow.

    An analysis of these flows can be achieved by introducing an additional time-gap between outflows and inflows. This modification can become part of the risk‑neutral valuation model, which then values the replication cost of the time-gapped cash-flow structure.
    The benefit is two-fold: the sensitivities of the modified valuation report the liquidity requirements to bridge the time-gap, and, the valuation change report the associated cost - the valuation difference can be considered to be an \textit{liquidity (risk) valuation adjustment}. See Section~\ref{sec:liqui:definition-lva}.

    \paragraph{Cross-Asset Transaction Netting:} The replication strategy associated with a risk-neutral valuation model usually makes the idealized assumption that the self-financing replication strategy, i.e., the adjustment of the hedge position, occurs instantaneously without frictions, i.e., does not require intermediary liquidity: selling units of one asset, buying units of another is an atomic transaction.
    For some hedge instruments, this assumption is approximately correct. If instruments are cleared on a central counterparty (CCP), the transaction is settled on a netting basis.

    If the assumption is not fulfilled, it is of interest to extract the information on the required hedge-portfolio adjustment. Note that these adjustments may not be associated with any product cash-flow. They are adjustments of a dynamic hedge induced by market changes.
    This shows the relevance of atomic multi-asset settlement / delivery-versus-payment (DvP) in this context, \cite{FriesKohl-Landgraf2023, lee2021atomic}.

    \subsection{Replication-Consistent Future Liquidity Requirements}

    We assume a risk-neutral valuation model that consistently simulates different funding sources, for example,
    \begin{itemize}
        \item the funding zero coupon bonds with different maturities $T$, $t \mapsto P^{f}(T;t)$,

        \item different collateral accounts, $t \mapsto N^{C_{k}}(t)$ \text{.}
    \end{itemize}
    Let $t \mapsto V(t)$ denote the time-$t$ value of a position or portfolio. Here, $V$, $P^{f}(T)$, $N^{C_{k}}$ are stochastic processes.

    We assume that the model evolves on a (possibly fine) time discretization $\{ t_{i} \}$. We assume that the time-discretization contains the times on which adjustment of the hedge portfolio is performed.

We interpret the valuation model as producing (scenario-wise) net funding cash-flow increments over the re-hedging grid.

    With respect to partial derivatives, we assume that the set of funding instruments, e.g., $P^{f}(t_{i})$ has been extended to a basis with hedge-instruments that are orthogonal (i.e., their sensitivity with respect to the funding instruments is zero), such that the sensitivity with respect to $P^{f}(t_{i})$ represents the number of units to hold in the respective zero coupon bond to replicate the cash-flow.

    As stated in Section~\ref{sec:liqui:intro:setup}, let 
    \[
        \phi_{P^{f}(T)}[V](t) := \frac{\partial V(t)}{\partial P^{f}(T;t)},
        \qquad
    \phi_{C_k}[V](t) := \frac{\partial V(t)}{\partial N^{C_k}(t)}.
    \]
    denote the hedge ratios of the position into funding and collateral sources and let
    \begin{equation}
        \tag{\ref{eq:setup:delta}}
        \delta_{A}[V](t_{i})\ :=\ \phi_{A}[V](t_{i})-\phi_{A}[V](t_{i-1}).
    \end{equation}
    denote the required adjustment in the corresponding source ($A=P^{f}(T)$ or $C_k$). For a given set of funding hedge instruments $\mathcal{A}_{i}$, we denote the rebalancing vector by
    \begin{equation}
        r_{i} := (\delta_A[V](t_{i}))_{A\in\mathcal{A}_{i}} \text{.}
    \end{equation}

\subsubsection{Pure Term-Funding}

    We first restrict ourselves to pure term-funding with respect to $\{ \ P^{f}(t_{i}) \ \vert \ i=0,\ldots,n \ \}$ and consider collateral sources later.

    If the fixing of a cash-flow does not depend on a funding source, then for a predictable contractual cash-flow $X_{i}$ at $t_{i}$, i.e., $X_{i} \in \mathcal{F}_{t_{i-1}}$, one may decompose
    \begin{equation*}
        V(t_{i-1}) = \bar{V}_{i}(t_{i-1}) + X_{i} \, P^{f}(t_{i};t_{i-1}) \text{,}
    \end{equation*}
    with $\phi_{P^{f}(t_{i})}[\bar{V}] = 0$, and hence
    \begin{equation}
        \label{eq:liqui:predictable-cashflow-relation}
        \phi_{P^{f}(t_{i})}[V](t_{i-1}) \ = \ X_{i} \text{.}
    \end{equation}
    Thus, in that case, the level vector $\left( \phi_{P^{f}(t_j)}[V](t_{j-1}) \ \vert \ j=i,\ldots,n \right)$ provides the replication-consistent analogue of a static projected cash-flow profile across maturities.

    The identification \eqref{eq:liqui:predictable-cashflow-relation} applies only to the predictable contractual cash-flow component.
    For funding-dependent payoffs, such as a single-curve FRA or an option that depends on the funding forward rate, the shortest-maturity funding position need not correspond to a pure contractual cash-flow, because part of the apparent payoff is already represented by the funding replication across maturities. This is the essence of considering \emph{replication-consistent cash-flow projections}.

    For an implementation based purely on AAD outputs, it is convenient to choose the \emph{funding-desk boundary} and to define the settled cash increment directly from the funding profile $\phi_{P^{f}(t_{j})}[V]$ and its rebalancing.

    The rebalancing vector
    \begin{equation*}
        r_{i}(\omega) = \left( \delta_{P^{f}(t_{j})}[V](t_{i},\omega) \ \vert \ j=i,\ldots,n \right)
    \end{equation*}
    describes the change of this funding profile over $(t_{i-1},t_{i}]$. In particular, the entries with $j > i$ describe the implied trading activity in the remaining term-funding structure.

    Assuming an idealized pure term-funding implementation without pooling or limits, the maturing shortest-maturity funding position contributes the cash amount
    \begin{equation*}
        \phi_{P^{f}(t_{i})}[V](t_{i-1}) \text{,}
    \end{equation*}
    while executing the rebalancing of the longer-dated funding positions generates gross cash settlements
    \begin{equation*}
        -\delta_{P^{f}(t_j)}[V](t_{i}) \, P^{f}(t_j;t_{i}) \text{,} \qquad j=i+1,\ldots,n \text{.}
    \end{equation*}
    We therefore denote by $\kappa_{i}$ the net cash increment settled at time $t_{i}$ under the chosen settlement convention. Under the funding-desk boundary one has
    \begin{equation}
        \label{eq:lva:term-funding-cash-flow-contributors}
        \kappa_{i}
        \ := \
        \phi_{P^{f}(t_{i})}[V](t_{i-1})
        -
        \sum_{j=i+1}^{n} \delta_{P^{f}(t_j)}[V](t_{i}) \, P^{f}(t_j;t_{i})
        \text{,}
    \end{equation}
    which depends only on the AAD-accessible quantities $\phi$ and $\delta$.
    If one chooses a wider strategy boundary that also includes explicit contractual product cash-flows, then those flows may be incorporated on top of \eqref{eq:lva:term-funding-cash-flow-contributors} according to the corresponding settlement convention.
    In more general setups, $\kappa_{i}(\omega)$ aggregates the externally settled cash effects of unsecured funding, cash collateral, repo funding and other funding sources.

    \subsubsection{Settled or Cash-Collateralized Products}

    Additionally, the time-discrete stochastic processes $t_{i} \mapsto \delta_{C_{k}}[V](t_{i})$ provide precise information on hedge-adjustments for a portfolio $V$ that uses the account $C_{k}$ as funding source, where the adjustments are defined by a portfolio-specific settlement valuation. Examples are collateralized portfolios or settle-to-market portfolios.

    \subsubsection{Mixed Funding Sources}

    In an environment with different funding sources and settlement schemes, the mapping from the rebalancing vector $r_{i}$ to the net cash increment $\kappa_{i}$ depends on the chosen strategy boundary and settlement convention. To keep the exposition simple, we treat the different setups separately. They just differ by different rebalancing vectors and settlement mappings.

    \subsubsection{Relation of the Replication-Consistent Cash-Flows to the Valuation}

    The time-discrete stochastic processes $t_{i} \mapsto \kappa_{i}$ and $t_{i} \mapsto r_{i}$ provide precise, replication-consistent cash-flow information. We may use them to define derived quantities. The replication-implied rebalancing vector $r_{i}$ describes the scenario-wise adjustment of the term-funding structure across maturities and funding sources that may be used for treasury and term-funding management, while $\kappa_{i}$ is the relevant object for immediate liquidity management.

    The level vector $\left( \phi_{P^{f}(t_j)}[V](t_{j-1}) \right)_{j=i,\ldots,n}$ provides the replication-consistent analogue of a static projected cash-flow profile, whereas the increment vector $r_{i}$ records the implied trading activity needed to update that profile.
    The two are linked by the desk’s execution and pooling convention: implementing $r_{i}$ generates gross settlement cash-flows, whose pooled / netted residual is $\kappa_{i}$.

    \begin{proposition}[PV of the contractual cash-flow stream]
    \label{prop:liqui:pv-contractual-cashflows}
    Let $(X_{i})_{i=1,\ldots,n}$ denote the contractual cash-flow stream of the position.
    Then
    \begin{equation}
        \label{eq:liqui:pv-contractual-cashflows}
        V(0)
        =
        N(0)\,\mathrm{E}^{\mathbb{Q}^{N}}\!\left(\sum_{i=1}^{n}\frac{X_{i}}{N(t_{i})}\right)
        \text{.}
    \end{equation}
    \end{proposition}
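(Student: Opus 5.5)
This is essentially the stream pricing identity \eqref{eq:setup:streampricing} evaluated at $t_0=0$. I would argue from the replication viewpoint so that the statement says something beyond a restatement of the definition: the value of the self-financing strategy equals the numeraire-discounted sum of contractual cash-flows.

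First, I set $j=0$ in \eqref{eq:setup:streampricing}. Since $\mathcal{F}_0$ is trivial (the standing convention), the conditional expectation reduces to an unconditional one, and this gives \eqref{eq:liqui:pv-contractual-cashflows} directly, with the sum over $i=1,\ldots,n$.

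Second, to connect with the interpretation of $V$ as the value of a self-financing replication strategy, I would give the standard argument. Let $\tilde V(t_j):=V(t_j)/N(t_j)+\sum_{i\le j}X_i/N(t_i)$ be the discounted gains process, i.e.\ the discounted portfolio value plus the discounted cash-flows already paid out. Self-financing with payouts $X_i$ at $t_i$ means portfolio changes are funded only by trading gains in traded assets. Under $\mathbb{Q}^N$, $N$-discounted traded prices are martingales, so $\tilde V$ is a $\mathbb{Q}^N$-martingale. The terminal condition $V(t_n)=0$ holds because no cash-flows remain after $t_n$. Hence
\[
V(0)/N(0)=\tilde V(0)=\mathrm{E}^{\mathbb{Q}^N}\bigl(\tilde V(t_n)\bigr)=\mathrm{E}^{\mathbb{Q}^N}\Bigl(\sum_{i=1}^n X_i/N(t_i)\Bigr).
\]
Multiplying by $N(0)$ gives the claim. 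Numeraire independence follows from \eqref{eq:setup:numerairechange}: changing to $N_2$ multiplies the integrand by $N_1(t_i)N_2(0)/(N_2(t_i)N_1(0))$, applied via the tower property at each $t_i$.

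The main obstacle is the true martingale property, as opposed to merely local martingale, together with integrability of $\sum_i X_i/N(t_i)$. On a finite grid with integrable payoffs this is automatic once the strategy is admissible. I would simply assume admissibility and integrability as part of the no-arbitrage setup stated in Section~\ref{sec:liqui:intro:setup}.

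The point of the proposition for the paper is that $V(0)$ depends only on the contractual stream. The replication-generated increments $\kappa_i$ and $r_i$ redistribute the timing of cash but leave the present value unchanged. I would note this as the lead-in to the LVA definition.
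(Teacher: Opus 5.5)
Your first step is exactly the paper's proof, which simply sets $j=0$ in \eqref{eq:setup:streampricing}. Your remark that this needs $\mathcal{F}_0$ to be trivial makes explicit a point the paper leaves implicit, and your discounted-gains martingale argument is a correct but optional justification of the same identity from the replication interpretation. One caution on your closing remark: the paper does not claim that the $\kappa_i$, let alone the rebalancing flows $r_i$, leave the present value unchanged. Proposition~\ref{prop:liqui:pv-kappa-cashflows} recovers $V(0)$ from the $\kappa_i$ only up to the discrete-hedging residual $\varepsilon_i^{\kappa}$, which is $O(|\Pi|)$, and the paper states that $V^{\mathrm{reb},0}(0)$ need not equal $V(0)$.
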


    \begin{proof}
    This is \eqref{eq:setup:streampricing} with $j=0$.
    \end{proof}

    \begin{proposition}[PV of replication-generated net cash increments up to a discrete-hedging residual]
    \label{prop:liqui:pv-kappa-cashflows}
    Consider the pure term-funding setup. For each hedge interval $(t_{i-1},t_{i}]$, let $H_{i}$ denote the frozen one-step delta-replication portfolio selected at $t_{i-1}$, whose funding part is given by the hedge ratios $\phi_{P^{f}(t_{j})}[V](t_{i-1})$, $j=i,\ldots,n$, and assume that
    \[
        H_{i}(t_{i-1}) = V(t_{i-1}),
        \qquad i=1,\ldots,n.
    \]
    Let $\kappa_{i}$ be the externally settled net funding cash increment at $t_{i}$, given by \eqref{eq:lva:term-funding-cash-flow-contributors}. Define the continuation residual by
    \begin{equation}
        \label{eq:liqui:kappa-residual}
        \varepsilon_{i}^{\kappa}
        \ := \
        V(t_{i}) - \bigl(H_{i}(t_{i})-\kappa_{i}\bigr),
        \qquad i=1,\ldots,n-1,
    \end{equation}
    with $\varepsilon_{n}^{\kappa}:=0$.
    If each frozen hedge portfolio $H_{i}$ is self-financing over $(t_{i-1},t_{i}]$ so that its discounted value is a $\mathbb{Q}^{N}$-martingale on that interval, and if $V(t_{n})=0$, then
    \begin{equation}
        \label{eq:liqui:pv-kappa-cashflows}
        N(0)\,\mathrm{E}^{\mathbb{Q}^{N}}\!\left(\sum_{i=1}^{n}\frac{\kappa_{i}}{N(t_{i})}\right)
        =
        V(0)
        +
        N(0)\,\mathrm{E}^{\mathbb{Q}^{N}}\!\left(\sum_{i=1}^{n-1}\frac{\varepsilon_{i}^{\kappa}}{N(t_{i})}\right)
        \text{.}
    \end{equation}
    \end{proposition}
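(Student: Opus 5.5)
The plan is to telescope the discounted value process along the hedge grid. I will write
\[
-V(0) = \frac{V(t_n)}{N(t_n)}N(0) - \frac{V(t_0)}{N(t_0)}N(0) = N(0)\sum_{i=1}^{n}\Bigl(\frac{V(t_i)}{N(t_i)}-\frac{V(t_{i-1})}{N(t_{i-1})}\Bigr),
\]
using $V(t_n)=0$. This identity holds pathwise, so I can take $\mathbb{Q}^{N}$-expectations at the end.

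For each increment I will substitute the definition \eqref{eq:liqui:kappa-residual}. For $i\le n-1$ it gives $V(t_i)=H_i(t_i)-\kappa_i+\varepsilon_i^\kappa$. For $i=n$ I will need $V(t_n)=H_n(t_n)-\kappa_n$, which is consistent with $\varepsilon_n^\kappa:=0$. I will check this case separately: at $t_n$ no longer-dated funding positions remain, so by \eqref{eq:lva:term-funding-cash-flow-contributors} we have $\kappa_n=\phi_{P^f(t_n)}[V](t_{n-1})=H_n(t_n)$. This uses that the orthogonal non-funding hedge instruments carry no terminal cash at $t_n$, or that their value is included in the residual convention. I will state this step as part of the convention behind $\varepsilon_n^\kappa:=0$. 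I will also use $V(t_{i-1})=H_i(t_{i-1})$. Each increment then becomes
\[
\frac{H_i(t_i)}{N(t_i)}-\frac{H_i(t_{i-1})}{N(t_{i-1})} - \frac{\kappa_i}{N(t_i)} + \frac{\varepsilon_i^\kappa}{N(t_i)}.
\]

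Next I will take $\mathbb{Q}^{N}$-expectations and apply the tower property with conditioning on $\mathcal{F}_{t_{i-1}}$. The frozen portfolio $H_i$ is self-financing on $(t_{i-1},t_i]$ with martingale discounted value, so the first difference has zero expectation. Summing over $i$ then gives
\[
-V(0) = -N(0)\mathrm{E}^{\mathbb{Q}^N}\Bigl(\sum_i \kappa_i/N(t_i)\Bigr) + N(0)\mathrm{E}^{\mathbb{Q}^N}\Bigl(\sum_{i\le n-1}\varepsilon_i^\kappa/N(t_i)\Bigr),
\]
where I use $N(0)/N(t_0)=1$. Rearranging yields \eqref{eq:liqui:pv-kappa-cashflows}.

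The main obstacle is the terminal step, that is, justifying that $\kappa_n$ exactly offsets $H_n(t_n)$ so that no residual appears at $t_n$. I will handle it through the definition of $\kappa_n$ together with $V(t_n)=0$, as described above. A secondary point is integrability of $\kappa_i/N(t_i)$ and $\varepsilon_i^\kappa/N(t_i)$, which I will assume implicitly as the paper does.
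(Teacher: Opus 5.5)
Your proof is correct and is essentially the paper's proof. The paper iterates the one-step identity $V(t_{i-1})/N(t_{i-1}) = \mathrm{E}^{\mathbb{Q}^{N}}\bigl((\kappa_i + V(t_i) - \varepsilon_i^{\kappa})/N(t_i) \,\big|\, \mathcal{F}_{t_{i-1}}\bigr)$ over $i=1,\ldots,n$ and uses $V(t_n)=0$, which is your telescoping sum combined with the tower property. Your explicit handling of the terminal step is a point the paper's proof uses silently: it applies $H_i(t_i)=\kappa_i+V(t_i)-\varepsilon_i^{\kappa}$ at $i=n$ with $\varepsilon_n^{\kappa}:=0$, which requires $H_n(t_n)=\kappa_n=\phi_{P^{f}(t_n)}[V](t_{n-1})$, so you are right to state this as an assumption.
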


    \begin{proof}
    Since $H_{i}$ is initialized by $H_{i}(t_{i-1})=V(t_{i-1})$ and is self-financing over $(t_{i-1},t_{i}]$, its discounted value is a martingale on that interval, and hence
    \[
        \frac{V(t_{i-1})}{N(t_{i-1})}
        =
        \mathrm{E}^{\mathbb{Q}^{N}}\!\left(
            \frac{H_{i}(t_{i})}{N(t_{i})}
            \,\middle|\,
            \mathcal{F}_{t_{i-1}}
        \right).
    \]
    Using the definition of $\varepsilon_{i}^{\kappa}$, we have
    \[
        H_{i}(t_{i}) = \kappa_{i} + H_{i}(t_{i}) - \kappa_{i} = \kappa_{i} + V(t_{i}) - \varepsilon_{i}^{\kappa}.
    \]
    Therefore
    \[
        \frac{V(t_{i-1})}{N(t_{i-1})}
        =
        \mathrm{E}^{\mathbb{Q}^{N}}\!\left(
            \frac{\kappa_{i}}{N(t_{i})}
            +
            \frac{V(t_{i})}{N(t_{i})}
            -
            \frac{\varepsilon_{i}^{\kappa}}{N(t_{i})}
            \,\middle|\,
            \mathcal{F}_{t_{i-1}}
        \right).
    \]
    Iterating from $i=1$ to $n$ and using $V(t_{n})=0$ yields \eqref{eq:liqui:pv-kappa-cashflows}.
    \end{proof}

    \begin{proposition}[Order of the valuation residual of the time-discrete hedge]
    \label{prop:liqui:kappa-residual-order}
    Let $|\Pi| := \max_{i=0,\ldots,n-1}(t_{i+1}-t_{i})$ denote the mesh of the hedge grid.
    In addition to the assumptions of Proposition~\ref{prop:liqui:pv-kappa-cashflows}, assume that on each interval $(t_{i-1},t_{i}]$ the active state vector of the hedge is an It\^o diffusion with bounded coefficients under $\mathbb{Q}^{N}$, and that there exists a $C^{1,2}$ value function with bounded derivatives up to third order representing $V$. Assume moreover that $H_{i}$ is the frozen first-order hedge of this value function on $(t_{i-1},t_{i}]$.
    Then the valuation residual in \eqref{eq:liqui:pv-kappa-cashflows} satisfies
    \begin{equation}
        \label{eq:liqui:kappa-residual-order}
        N(0)\,\mathrm{E}^{\mathbb{Q}^{N}}\!\left(\sum_{i=1}^{n-1}\frac{\varepsilon_{i}^{\kappa}}{N(t_{i})}\right)
        =
        O(|\Pi|)
        \text{.}
    \end{equation}
    \end{proposition}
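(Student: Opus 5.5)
The plan is to bound each one-step residual conditionally and then sum over the grid. Since $H_{i}(t_{i-1})=V(t_{i-1})$ and the discounted $H_{i}$ is a $\mathbb{Q}^{N}$-martingale on $(t_{i-1},t_{i}]$, taking the conditional expectation of \eqref{eq:liqui:kappa-residual} gives
\[
\mathrm{E}^{\mathbb{Q}^{N}}\!\left(\frac{\varepsilon_{i}^{\kappa}}{N(t_{i})}\,\middle|\,\mathcal{F}_{t_{i-1}}\right)
=
\mathrm{E}^{\mathbb{Q}^{N}}\!\left(\frac{V(t_{i})+\kappa_{i}}{N(t_{i})}\,\middle|\,\mathcal{F}_{t_{i-1}}\right)-\frac{V(t_{i-1})}{N(t_{i-1})}.
\]
This is the one-step defect of the discounted value process including the settled cash. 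By the tower property, the left-hand side of \eqref{eq:liqui:kappa-residual-order} equals $N(0)\sum_{i}\mathrm{E}^{\mathbb{Q}^{N}}$ of these conditional defects. It therefore suffices to show that each defect is $O((t_{i}-t_{i-1})^{2})$ uniformly, up to a deterministic constant times $N(t_{i-1})^{-1}$. Summing over $i$ then gives $\sum_i \Delta t_i^{2}\le |\Pi|\,t_n$, which is $O(|\Pi|)$.

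For the one-step estimate I write $V(t)=v(t,X_t)$ with the $C^{1,2}$ value function $v$ of the hypothesis, and work in $N$-discounted units, setting $\tilde v=v/N$. Because $V$ itself is a valuation, $\tilde v(t,X_t)$ is a martingale, so its drift vanishes. The frozen first-order hedge $H_{i}$ uses the gradient $\nabla\tilde v(t_{i-1},X_{t_{i-1}})$ over the whole interval. The residual $\varepsilon_{i}^{\kappa}$ is then, to leading order, the difference between $\tilde v(t_i,X_{t_i})$ and its first-order Taylor expansion around $(t_{i-1},X_{t_{i-1}})$, corrected by the cash $\kappa_i$. Applying Itô's formula to $\nabla\tilde v(s,X_s)-\nabla\tilde v(t_{i-1},X_{t_{i-1}})$ shows that the pathwise residual equals $\int_{t_{i-1}}^{t_i}(\nabla\tilde v(s,X_s)-\nabla\tilde v(t_{i-1},X_{t_{i-1}}))\,dX_s$ plus bounded-coefficient terms. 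The stochastic integral part has zero conditional mean only if the hedge increment is a martingale, which it is here. The remaining drift-type terms are $O(\Delta t_i)$ in size, and they multiply an increment whose conditional mean is itself $O(\Delta t_i)$. Bounded derivatives up to third order, together with bounded diffusion coefficients, provide the uniform constants through standard moment estimates $\mathrm{E}|X_s-X_{t_{i-1}}|^{p}\le C(s-t_{i-1})^{p/2}$.

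The step I expect to be the main obstacle is identifying the role of $\kappa_i$. Here $\kappa_i$ is exactly the cash needed to reconcile the frozen portfolio with the rebalanced funding profile: the maturing $P^{f}(t_i)$ position minus the revaluation of the rebalanced longer bonds, as in \eqref{eq:lva:term-funding-cash-flow-contributors}. I have to show that $H_i(t_i)-\kappa_i$ coincides with the new hedge's value $H_{i+1}(t_i)$ up to the non-funding hedge components. Those components are orthogonal by the basis assumption, and their frozen-versus-rebalanced mismatch is again a second-order Taylor term. Once this is established, $\varepsilon_i^\kappa=V(t_i)-H_{i+1}(t_i)+(\text{second-order terms})$. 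Since $H_{i+1}(t_i)=V(t_i)$ by hypothesis, only the second-order terms remain. Their conditional expectations are then bounded by $C\Delta t_i^{2}/N(t_{i-1})$, where the factor $N(t_{i-1})^{-1}$ is absorbed after multiplying by $N(0)$ and taking the expectation, using the bond-pricing identity and boundedness. I would first try to do this bookkeeping explicitly with the funding profile, and fall back on a generic "one-step hedging error of a frozen delta is $O(\Delta t^{2})$ in mean" lemma if the explicit form becomes messy.
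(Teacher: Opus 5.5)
Your outline follows the paper's own proof: a one-step Taylor--It\^o expansion, a conditional mean of each discounted residual of order $(t_i-t_{i-1})^2$, and summation over the grid. However, the step you single out as the main obstacle fails, and your own bookkeeping shows why. Done exactly, it gives $H_i(t_i)-\kappa_i=\mathrm{NF}_i(t_i)+\sum_{j>i}\phi_{P^{f}(t_j)}[V](t_i)\,P^{f}(t_j;t_i)$. Here $\mathrm{NF}_i$ is the non-funding part of $H_i$, with holdings $\psi_A(t_{i-1})$ frozen at $t_{i-1}$. Together with $V(t_i)=H_{i+1}(t_i)$ this yields the exact identity
\[
\varepsilon_i^{\kappa}=\sum_{A}\bigl(\psi_A(t_i)-\psi_A(t_{i-1})\bigr)\,A(t_i),
\]
so the residual is precisely the value of the non-funding rebalancing trade. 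That is not a second-order Taylor term: pathwise it is first order in the state increment. Write $\tilde A:=A/N$, $\Delta\psi_A:=\psi_A(t_i)-\psi_A(t_{i-1})$ and $\Delta\tilde A:=\tilde A(t_i)-\tilde A(t_{i-1})$. Then its discounted conditional mean is $\sum_A\bigl(\tilde A(t_{i-1})\,\mathrm{E}^{\mathbb{Q}^{N}}(\Delta\psi_A\mid\mathcal{F}_{t_{i-1}})+\mathrm{E}^{\mathbb{Q}^{N}}(\Delta\psi_A\,\Delta\tilde A\mid\mathcal{F}_{t_{i-1}})\bigr)$. Even when the pricing equation removes the drift of the holdings, the covariation term (gamma times quadratic variation) remains and is of order $t_i-t_{i-1}$. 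Your heuristic ``$O(\Delta t)$ drift times an increment with conditional mean $O(\Delta t)$'' overlooks exactly this term.

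The fallback lemma does not rescue the argument. By your first identity, the full frozen-delta error $V(t_i)+X_i-H_i(t_i)$ has discounted conditional mean exactly zero. Hence $\mathrm{E}^{\mathbb{Q}^{N}}(\varepsilon_i^{\kappa}/N(t_i)\mid\mathcal{F}_{t_{i-1}})=\mathrm{E}^{\mathbb{Q}^{N}}((\kappa_i-X_i)/N(t_i)\mid\mathcal{F}_{t_{i-1}})$. This quantity is governed by how the hedge value is split between funding and non-funding instruments, not by the size of the hedging error.

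In fact the estimate cannot be proved in this generality. Let $G_k$ denote the discounted value at $t_k$ of the non-funding part of the hedge selected at $t_k$. The one-step means above are $\mathrm{E}^{\mathbb{Q}^{N}}(G_i\mid\mathcal{F}_{t_{i-1}})-G_{i-1}$, so the left-hand side of \eqref{eq:liqui:kappa-residual-order} telescopes to $N(0)\bigl(\mathrm{E}^{\mathbb{Q}^{N}}(G_{n-1})-G_0\bigr)$. This is small only if $G$ is (nearly) a martingale.

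For a counterexample, take zero rates, $N\equiv 1$, $dS=\sigma\,dW$, and a payoff $h(S(t_n))$ with $h$ smooth with bounded derivatives. Write $V(t)=P^{f}(t_n;t)\,c\bigl(t,S(t)/P^{f}(t_n;t)\bigr)$ with $c$ the Bachelier price. Then $\psi_S=\partial_x c$ and $\phi_{P^{f}(t_n)}=c-S\,\partial_x c$. Since $x\,\partial_x c$ has drift $\sigma^2\partial_{xx}c$, the residual equals $\sigma^2\,t_{n-1}\,\mathrm{E}^{\mathbb{Q}^{N}}(h''(S(t_n)))$, which does not vanish as $|\Pi|\to 0$ (e.g.\ $h(x)=\sqrt{1+x^2}$). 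Under Black--Scholes, $S\,\partial_S c$ happens to be a martingale by scale invariance and the residual is zero, which is why the claim looks plausible.

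The paper's proof asserts the same cancellation ``by the pricing equation'' and so does not supply the missing step either. An additional hypothesis is needed. For instance, one could assume that the discounted non-funding part of the hedge is a martingale. Alternatively, if the hedge has no non-funding part, then $\varepsilon_i^{\kappa}\equiv 0$ and no expansion is required.
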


    \begin{proof}
    Over one step, the frozen hedge portfolio $H_{i}$ matches the value and first-order sensitivities at $t_{i-1}$.
    Hence a second-order Taylor--It\^o expansion of the continuation value around $t_{i-1}$ shows that the one-step residual $\varepsilon_{i}^{\kappa}$ is generated by second derivatives of the value function with respect to the active state variables, i.e. by gamma-type terms, plus higher-order remainders.
    Since increments of the active state vector are $O_{L^{2}}(\sqrt{t_{i}-t_{i-1}})$, the one-step hedge error is $O(t_{i}-t_{i-1})$ pathwise.
    After discounting and taking conditional expectations under $\mathbb{Q}^{N}$, the first-order drift contribution cancels by the pricing equation for the value function, so that the conditional expectation of the discounted one-step residual is $O((t_{i}-t_{i-1})^{2})$.
    Summing over the grid on a fixed time horizon gives
    \[
        \sum_{i=1}^{n-1} O((t_{i}-t_{i-1})^{2})
        =
        O(|\Pi|),
    \]
    which yields \eqref{eq:liqui:kappa-residual-order}.
    \end{proof}

    \paragraph{Remark.}
    Proposition~\ref{prop:liqui:pv-contractual-cashflows} shows that the contractual cash-flow representation prices the deal exactly.
    Proposition~\ref{prop:liqui:pv-kappa-cashflows} shows that the replication-generated net cash increments $\kappa_{i}$ price the deal up to the discrete-hedging residual $\varepsilon_{i}^{\kappa}$.
    Proposition~\ref{prop:liqui:kappa-residual-order} states that the corresponding \emph{valuation} residual is first order in the mesh of the hedge grid.
    No analogous pricing identity is claimed for $\phi_{i}$ alone, since $\phi_{i}$ omits the rebalancing cash-flows that are part of $\kappa_{i}$.
    This order statement is not pathwise: the accumulated realized hedging error of the discrete hedge is generally of larger stochastic order.

\medskip

\paragraph{Remark:}

For liquidity management, the net external cash increments are the primary objects of interest: $\kappa_{i}(\omega)$ gives the model-implied net cash increment over $(t_{i-1},t_{i}]$ on scenario $\omega$. By contrast, a non-zero future hedge ratio $\phi_{P^{f}(T)}[V](s)(\omega)$ for $s>0$ should not by itself be interpreted as a funding need that must already be secured at time $0$; it is part of the future self-financing replication strategy and the associated cash-flows occur over time.

\subsection{A Liquidity (Risk) Valuation Adjustment (LVA)}
\label{sec:liqui:definition-lva}

    \paragraph{Remark (terminology).}
    We use the term \emph{liquidity valuation adjustment} (LVA) here in a narrow, operational sense: it captures the model-based funding impact of \emph{timing frictions} (e.g.\ settlement lags / time-gaps) applied to the replication-implied net cash increments. This notion is distinct from \emph{market-liquidity} adjustments driven by bid--ask spreads or price impact.

    \medskip

    In the pure term-funding setup discussed above, \eqref{eq:lva:term-funding-cash-flow-contributors} reflects the funding-desk netting assumption that the settled cash increment is described directly by the maturing shortest-maturity funding position together with the settlements generated by rebalancing the remaining term-funding profile.

    However, \eqref{eq:lva:term-funding-cash-flow-contributors} assumes the frictionless benchmark that all components entering the net cash increment can be settled without timing frictions.

    To assess the risk of a mismatch, we consider a settlement lag $\Delta t$ and define a corresponding valuation adjustment to bridge the cash-flows with additional funding.

    For the final value, it is relevant which netting or settlement scheme is assumed. Two aspects are relevant:
    \begin{enumerate}
        \item \label{en:liqui:lva-decomp-products} Netting of in-flows and out-flows across products.

        \item \label{en:liqui:lva-decomp-flows} Netting of a contractual cash-flow and its funding hedge.
    \end{enumerate}
    To address \ref{en:liqui:lva-decomp-products}, we assume that a portfolio has been decomposed into netting sets and that $V$ denotes one such netting set. In the limiting case, $V$ is an individual product.

    To address \ref{en:liqui:lva-decomp-flows}, we distinguish different assumptions on how contributors in \eqref{eq:lva:term-funding-cash-flow-contributors} are netted. We use the notation
    \begin{equation*}
        \left[ x \right]^{+} \ := \ \max( x , 0 ) \text{,} \qquad \left[ x \right]^{-} \ := \ \max( -x , 0 ) \text{,}
    \end{equation*}
    such that $x = x^{+} - x^{-}$.

    \paragraph{Cash-Flow Replication Netting Assumptions:}
    \begin{enumerate}
        \item \label{en:liqui:flow-netting-contractional} \textbf{contractual flow only:}
        \[
            \delta_{i}^{X,+} = \left[ X_{i} \right]^{+},
            \qquad
            \delta_{i}^{X,-} = \left[ X_{i} \right]^{-} \text{,}
        \]

        \item \label{en:liqui:flow-netting-norebalance} \textbf{replication-consistent maturing funding-bucket flow:}
        \[
            \delta_{i}^{\phi,+} = \left[ \phi_{P^{f}(t_{i})}[V](t_{i-1}) \right]^{+},
            \qquad
            \delta_{i}^{\phi,-} = \left[ \phi_{P^{f}(t_{i})}[V](t_{i-1}) \right]^{-} \text{,}
        \]

        \item \label{en:liqui:flow-netting-rebalancing} \textbf{internal rebalancing flows:}
        \[
            \delta_{i}^{\mathrm{reb},\pm}
            =
            \sum_{j=i+1}^{n}
            \left[
                -\delta_{P^{f}(t_{j})}[V](t_{i}) \, P^{f}(t_{j};t_{i})
            \right]^{\pm}
            \text{,}
        \]

        \item \label{en:liqui:flow-netting-perfect} \textbf{perfect replication netting:}
        \[
            \delta_{i}^{\kappa,+} = \left[ \kappa_{i} \right]^{+},
            \qquad
            \delta_{i}^{\kappa,-} = \left[ \kappa_{i} \right]^{-} \text{.}
        \]
    \end{enumerate}

    For each convention
    \[
        a \in \{ X,\phi,\mathrm{reb},\kappa \}
    \]
    we define the corresponding frictionless net cash increment by
    \begin{equation*}
        \delta_{i}^{a,0} \ := \ \delta_{i}^{a,+} - \delta_{i}^{a,-} \text{.}
    \end{equation*}

    To model a settlement lag (time gap) $\Delta t>0$ for \emph{inflows}---i.e.\ positive net cash-flows become available only at $t_{i}+\Delta t$---we define the time-gapped effective net cash increment at $t_{i}$ by
    \begin{equation}
        \label{eq:lva:gappedcashflow}
        \tilde{\delta}_{i}^{a,\Delta t}
        \ := \
        - \delta_{i}^{a,-}
        \ + \
        \delta_{i}^{a,+} \, P^{f}(t_{i}+\Delta t;t_{i}),
        \qquad
        a \in \{ X,\phi,\mathrm{reb},\kappa \} \text{.}
    \end{equation}
    Here, $P^{f}(t_{i}+\Delta t;t_{i})$ is the (short-term) discount factor used to fund the gap from $t_{i}$ to $t_{i}+\Delta t$.\footnote{Depending on the application, one may use a desk-specific funding curve, an unsecured curve, or a conservative proxy.}

    The corresponding frictionless and time-gapped valuations are
    \begin{equation}
        \label{eq:lva:gappedvalue}
        V^{a,0}(0)
        \ := \
        N(0)\,\mathrm{E}^{\mathbb{Q}^{N}}\!\left(\sum_{i=1}^{n}\frac{\delta_{i}^{a,0}}{N(t_{i})}\right),
        \qquad
        V^{a,\Delta t}(0)
        \ := \
        N(0)\,\mathrm{E}^{\mathbb{Q}^{N}}\!\left(\sum_{i=1}^{n}\frac{\tilde{\delta}_{i}^{a,\Delta t}}{N(t_{i})}\right)
        \text{.}
    \end{equation}

    \paragraph{Remark.}
    For $a=X$, Proposition~\ref{prop:liqui:pv-contractual-cashflows} implies $V^{X,0}(0)=V(0)$.
    For $a=\kappa$, Propositions~\ref{prop:liqui:pv-kappa-cashflows} and \ref{prop:liqui:kappa-residual-order} show that
    \[
        V^{\kappa,0}(0)
        =
        V(0)
        +
        N(0)\,\mathrm{E}^{\mathbb{Q}^{N}}\!\left(\sum_{i=1}^{n-1}\frac{\varepsilon_{i}^{\kappa}}{N(t_{i})}\right)
        =
        V(0) + O(|\Pi|)
        \text{,}
    \]
    under the stated smoothness and diffusion assumptions.
    By contrast, the frictionless quantities $V^{\phi,0}(0)$ and $V^{\mathrm{reb},0}(0)$ are convention-dependent funding-desk quantities and need not agree with the contractual deal value $V(0)$.
    The indexed LVA introduced below therefore compares timing-friction costs \emph{within} a fixed cash-flow representation $a$.

    \paragraph{Definition (LVA).}
    For every convention $a \in \{ X,\phi,\mathrm{reb},\kappa \}$, the corresponding liquidity valuation adjustment for settlement lag $\Delta t$ is defined as
    \begin{equation}
        \label{eq:lva:def}
        \mathrm{LVA}^{a,\Delta t}(0)
        \ := \
        V^{a,\Delta t}(0) - V^{a,0}(0)
        \ = \
        N(0)\,\mathrm{E}^{\mathbb{Q}^{N}}\!\left(\sum_{i=1}^{n}\frac{\tilde{\delta}_{i}^{a,\Delta t}-\delta_{i}^{a,0}}{N(t_{i})}\right).
    \end{equation}

    Under non-negative effective funding rates over the gap (so that $P^{f}(t_{i}+\Delta t;t_{i})\le 1$), one has $\mathrm{LVA}^{a,\Delta t}(0)\le 0$ for every convention $a$.

    \paragraph{Remark:} For the different \textit{cash-flow replication netting assumptions}, the LVA of \ref{en:liqui:flow-netting-contractional} measures a settlement gap in positive contractual flows only. The LVA of \ref{en:liqui:flow-netting-norebalance} measures a settlement gap when the replication consistent cash-flow is considered, but term-funding activity is assumed to be frictionless. The LVA of \ref{en:liqui:flow-netting-rebalancing} assumes that the contractual cash-flow and its hedge offset perfectly, but assesses frictions in the remaining term-funding rebalancing. Finally, the LVA of \ref{en:liqui:flow-netting-perfect} measures timing frictions in the fully netted replication-consistent cash increment $\kappa_{i}$.

    \paragraph{Sign and monotonicity.}
    From the pathwise identity
    \begin{equation}
        \label{eq:lva:delta-diff}
        \tilde{\delta}_{i}^{a,\Delta t}-\delta_{i}^{a,0}
        \ = \ 
        \delta_{i}^{a,+}\bigl(P^{f}(t_{i}+\Delta t;t_{i})-1\bigr),
        \qquad
        a \in \{ X,\phi,\mathrm{reb},\kappa \},
    \end{equation}
    \eqref{eq:lva:def} can be written as
    \begin{equation}
        \label{eq:lva:def-plus}
        \mathrm{LVA}^{a,\Delta t}(0)
        \ = \
        N(0)\,\mathrm{E}^{\mathbb{Q}^{N}}\!\left(\sum_{i=1}^{n}\frac{\delta_{i}^{a,+}\bigl(P^{f}(t_{i}+\Delta t;t_{i})-1\bigr)}{N(t_{i})}\right).
    \end{equation}
    If $P^{f}(t_{i}+\Delta t;t_{i})\le 1$ for $\Delta t>0$ (e.g.\ non-negative effective funding rates over the gap),
    then $\tilde{\delta}_{i}^{a,\Delta t} \le \delta_{i}^{a,0}$ pathwise and hence $\mathrm{LVA}^{a,\Delta t}(0)\le 0$.
    Moreover, if $\Delta t\mapsto P^{f}(t_{i}+\Delta t;t_{i})$ is non-increasing, then $\Delta t\mapsto\mathrm{LVA}^{a,\Delta t}(0)$ is non-increasing as well.

\medskip

\paragraph{Small-gap approximation.}
Introduce the (possibly stochastic) \emph{effective annualized funding rate} over the gap by
\begin{equation}
    \label{eq:lva:effrate}
    r^{f}(t_{i},\Delta t)
    \ :=\ -\frac{1}{\Delta t}\log P^{f}(t_{i}+\Delta t;t_{i}),
\end{equation}
so that
\begin{equation*}
    P^{f}(t_{i}+\Delta t;t_{i})=\exp\!\bigl(-r^{f}(t_{i},\Delta t)\,\Delta t\bigr).
\end{equation*}
Using \eqref{eq:lva:def-plus} and $\exp(-x)-1=-x+O(x^2)$, we obtain for small $\Delta t$
\begin{equation}
    \label{eq:lva:firstorder}
    \begin{split}
        \mathrm{LVA}^{a,\Delta t}(0)
        & \ = \  N(0)\,\mathrm{E}^{\mathbb{Q}^{N}}\!\left(\sum_{i=1}^{n}\frac{\delta_{i}^{a,+}\bigl(e^{-r^{f}(t_{i},\Delta t)\Delta t}-1\bigr)}{N(t_{i})}\right) \\
        & \ \approx\
        -\,N(0)\,\mathrm{E}^{\mathbb{Q}^{N}}\!\left(\sum_{i=1}^{n}\frac{\delta_{i}^{a,+}\,r^{f}(t_{i},\Delta t)\,\Delta t}{N(t_{i})}\right),
    \end{split}
\end{equation}
which makes explicit that (to first order) the adjustment scales with the delayed positive net cash increments and the short-term funding rate over the gap.

\paragraph{Remark} The LVA above is a non-linear adjustment like other xVAs~\cite{BrigoLiuPallaviciniSloth2016}, depending on the netting portfolio, i.e., it makes a difference if the cost is calculated on a per-product basis and then aggregated or for a portfolio. Calculation on a portfolio level would correspond to the assumption of a settlement of the net cash-flow.

    \subsection{Implementation}

    The numerical methods that allow for the determination of the quantities $\delta_{P^{f}(T)}(t_{i})$, $\delta_{C_{k}}(t_{i})$ are well known, though demanding in resources. We give an indicative overview of the relevant aspects:

    We assume that the model is implemented as a Monte-Carlo simulation. Usually such a model simulates future cash-flows $X_{i}(\omega_{k})$ occurring in time $t_{i}$ on the path $\omega_{k}$.

    If finer simulation-time and tenor resolution is required than in standard implementations, short-rate models may be preferable. If, at the same time, a richer multi-factor term-structure representation is desired, a time-homogeneous term-structure model can provide both: a multi-factor term-structure specification together with high time-resolution at the short end \cite{Fries2020TermStructureRefinement}.
    
    As $V(t)$ represents a conditional expectation of the future cash-flows after $t$, it may be required to utilize an American Monte-Carlo method (e.g., a regression method, \cite{LongstaffSchwartz2001}) to estimate the conditional expectation, e.g., for an un-collateralized (funded) derivative
    \begin{equation*}
        V(t) \ = \ N^{f}(t) \ \mathrm{E}\left( \sum_{t_{i} > t} \frac{X_{i}}{N^{f}(t_{i})} \ \vert \ \mathcal{F}_{t} \right) \text{,}
    \end{equation*}
    where $N^{f}$ denotes the funding numeraire, such that $P^{f}(T;t) = \mathrm{E}\left( \frac{N^{f}(t)}{N^{f}(T)} \ \vert \ \mathcal{F}_{t} \right)$ represent the funding zero-bonds.

    \medskip
    
    Partial derivatives such as $\frac{\partial V(t)}{\partial P^{f}(T;t)}$ may be obtained by stochastic adjoint algorithmic differentiation~\cite{FriesAutoDiff4MonteCarloForwardSensitivitiesPreprint, FriesAutoDiff4MonteCarlo}.

%
%
%


    \clearpage
    \section{Conclusion}
    
    This note clarifies why undiscounted \emph{expected cash-flows} are often not the right object for liquidity forecasting of derivative portfolios.
    
    \paragraph{Main messages.}
    \begin{itemize}
        \item \textbf{Measure/numeraire dependence.}
        Undiscounted expectations $\mathrm{E}^{\mathbb{Q}^{N}}(X)$ depend on the chosen equivalent martingale measure (numeraire $N$). Mixing such expectations across products or legs can produce artificial residuals under aggregation or netting.
        
        \item \textbf{Stochastic payment times break the “sum-to-one” intuition.}
        For cash-flows occurring at a stopping time (e.g.\ exercise of Bermudan features), the decomposition into time-bucketed indicator cash-flows $X_{i}=\mathbf{1}_{\{\tau=t_{i}\}}$ yields $\sum_{i} \mathrm{E}^{\mathbb{Q}}(X_{i}) = 1$ under any fixed measure, but the corresponding \emph{replication-relevant} quantities (discounting sensitivities / bond hedge ratios) generally do not add up in that way due to correlation between timing indicators and discounting/funding.
        
        \item \textbf{Replication-consistent forecasting uses funding sensitivities.}
        Liquidity-relevant forecasts should be derived from hedge ratios with respect to funding instruments (and their rebalancing), because these quantities encode the self-financing trading strategy that produces intermediate cash movements.
    \end{itemize}
    
    \paragraph{From valuation to liquidity scenarios.}
    A risk-neutral valuation model can be used to extract pathwise (scenario-wise) funding positions $\phi_{P^{f}(T)}[V](t)$ and their increments $\delta_{P^{f}(T)}[V](t_{i})$, providing distributions of future funding adjustments that are directly relevant for operational liquidity planning and for stress testing trading frictions.
    
    \paragraph{Liquidity valuation adjustment.}
    To bridge the gap between idealized self-financing replication and real-world settlement and funding frictions, we propose a simple liquidity valuation adjustment (LVA) based on a modified cash-flow mapping (e.g.\ settlement lags for inflows). The LVA is defined as a difference of two otherwise consistent valuations and can be computed on a trade or, preferably, on a portfolio/netting-set level.
    
    \paragraph{Limitations and outlook.}
    The analysis relies on the internal consistency of the chosen valuation model and on interpreting sensitivities as replication proxies. In incomplete markets (e.g.\ non-hedgeable funding spread components) the sensitivities remain informative as \emph{model-implied funding requirements}, but they should not be over-interpreted as perfectly hedgeable exposures. Extensions include richer settlement conventions (DvP vs non-DvP), cross-asset liquidity constraints, and calibration of LVA parameters (time-gaps, transaction costs) to institution-specific data.

	\clearpage
	\printbibliography

    \clearpage

\ifappendix

    \appendix 

    \section{Appendix}

    \subsection{Interest Rate Products}

	We summarize some of the most common interest rate products. We simplify the presentation as we do not distinguish fixing date and period start date, neither payment date and period end date.

	It should be noted that important products like standardized swaps have a clearing obligation and many non-standardized products are collateralized. Clearing obligations or protocols like settle-to-market or collateralized-to-market alter the effective cash-flows of the product and hence their treatment with respect to liquidity management.

	We consider the following products as un-cleared, un-collateralized to study the behaviour of the corresponding cash-flows, although they usually come with clearing obligations or settlement/collateral processes.

	\paragraph{Forward Rate Agreement (Textbook Version):} A \textit{forward rate agreement} pays the difference of a forward rate $L(T_{1},T_{2};T_{1})$ and a fixed strike rate $K$ multiplied with a daycount fraction of the period length in $T_{2}$, i.e.,
	\begin{equation*}
		L(T_{1},T_{2};T_{1}) - K \quad \text{paid in\ } T_{2} \text{.}
	\end{equation*}

	\paragraph{Caplet:} A \textit{caplet} is an option on a forward rate agreement, i.e.~it pays
	\begin{equation*}
		\max \left( L(T_{1},T_{2};T_{1}) - K , 0 \right) \quad \text{paid in\ } T_{2} \text{.}
	\end{equation*}

	\paragraph{Swap:} A \textit{swap} is basically a collection of forward rate agreements with a common fixed rate $K$, i.e., it pays,
	\begin{equation*}
		L(T_{i},T_{i+1};T_{i}) - K \quad \text{paid in\ } T_{i+1}
	\end{equation*}
	for $i = 1, \ldots, n$.

	\paragraph{Swaption:} A swaption is an option on a swap. Usually, the option's exercise date $T_{e}$ is shortly before the first period start date $T_{1}$. From a cash-flow perspective, we must distinguish two products: a physically settled swaption, where upon exercise one receives a swap, hence all the future cash-flows of a swap, and, a cash settled swaption, where upon exercise one receives the value of the associated swap as a single cash-flow.

	The cash flows of both times of swaption depend on the exercise, which is determined by the value of the corresponding swap rate $S$ upon the exercise date $T_{e}$, such that potential cash-flow are multiplied with the indicator function $\indicatorfcn_{\{S(T_{e}) > K \}}$.

	\paragraph{Swaption (physically settled):} The cash-flow of a physical settled swaption is
	\begin{equation*}
		\indicatorfcn_{\{ S(T_{e}) > K \}} \cdot \left( L(T_{i},T_{i+1};T_{i}) - K \right) \quad \text{paid in\ } T_{i+1} \text{.}
	\end{equation*}

	\paragraph{Swaption (cash settled):} The cash‑flow of a cash‑settled swaption is
	\begin{equation*}
		\indicatorfcn_{\{ S(T_{e})-K > 0 \}} \cdot \left( \sum_{k=1}^{n} \left( L(T_{k},T_{k+1};T_{e}) - K \right) \cdot P(T_{k+1};T_{e}) \right) \quad \text{paid in\ } T_{e} \text{.}
	\end{equation*}
	Note: There exists a variant of a cash-settled swaption, where the cash-flow is determined with a simplified valuation using the swap rate only.

\fi

\clearpage

\section*{Notes}
\addcontentsline{toc}{section}{Notes}

\subsection*{Suggested Citation}

\begin{itemize}
	\item[] \sloppypar \textsc{Fries, Christian P.}: \textit{\articletitle}
    (2025). Available at SSRN: \url{https://ssrn.com/abstract=5514678} or \url{http://dx.doi.org/10.2139/ssrn.5514678}
	\newline
	\url{http://www.christian-fries.de/finmath}
\end{itemize}

\subsection*{Classification}
\begin{small}
	
	\noindent Classification:
	\href{http://www.ams.org/msc/}{MSC-class}: 65Y04 (Primary), 68M07, 68U20.
	\\
	\phantom{Classification:}
	\href{http://www.aeaweb.org/journal/jel_class_system.html}{JEL-class}: G13, C63.
	\\
	\phantom{Classification:}
	\href{http://www.acm.org/class/1998/ccs98.html}{ACM-class 1998}: G.1.4: Quadrature and Numerical Differentiation
%
%
%

	
	
	\noindent Keywords:
	Derivative Valuation,
	Funding Curve,
    Discounting Sensitivities,
    Expected Cash-Flows,
    Liquidity Management
	
	
		
\end{small}

\subsection*{Version History}

\begin{small}
\begin{center}
\begin{tabular}{p{3.8cm}|p{11.0cm}}

\hline
February 25, 2019
&
Funding-sensitivity as a replacement for expected
cash-flows.
\\[0.6ex]

\hline
September 15, 2025
&
Settlement frictions. SSRN Version.
\\[0.6ex]

\hline
October 15, 2025
&
Revision and extension.
\\[0.6ex]

\hline
April 9, 2026
&
Different netting conventions.\\[0.6ex]

\hline

\end{tabular}
\end{center}
\end{small}

\begin{small}
	\vfill
	
	\bigskip
	\centerline{\small\thepage \ pages. \thecpfNumberOfFigures \ figures. \thecpfNumberOfTables \ tables.}
\end{small}

\end{document}